\newcolumntype{P}[1]{>{\centering\arraybackslash}p{#1}}
\newtheorem{proposition}{Proposition}
\newcommand{\bx}{\mathbf{x}}
\newcommand{\mub}{\bar{\mu}}
\newcommand{\nub}{\bar{\nu}}
\newcommand{\bmu}{\boldsymbol{\mu}}
\newcommand{\ab}{\bar{a}}
\newcommand{\bb}{\bar{b}}
\newcommand{\mb}{\bar{m}}
\newcommand{\cb}{\bar{c}}
\newcommand{\M}{\mathbf{m}}
\newcommand{\etamax}{\eta_{\max}}
\newcommand{\lambdamax}{\lambda_{\max}}
\newcommand{\SStat}{\mathcal{A}}
\newcommand{\SSbar}{\bar{\mathcal{A}}}
\newcommand{\zero}{\mathbf{0}}
\newcommand{\unit}{\mathbf{1}}
\newcommand{\mumin}{\mu_{\min}}
\newcommand{\Umax}{u_{\max}}
\newcommand{\micro}{\textmu}
\begin{document}
\title{Motor Unit Number Estimation via Sequential Monte Carlo}
\author{
  Simon Taylor, Chris Sherlock, Gareth Ridall and Paul Fearnhead
}
\date{11th April 2018}

\maketitle

\section*{Abstract}
A change in the number of motor units that operate a particular
muscle is an important indicator for the progress of a neuromuscular disease
and the efficacy of
a therapy. Inference for realistic statistical models of
the typical data produced
when testing muscle function is difficult, and estimating the number
of motor units from these data 
 is an ongoing statistical challenge.
We consider a set of models for the data, each with a different
number of working motor units, and present a novel method for Bayesian inference, based on sequential
Monte Carlo, which provides estimates of the marginal
likelihood and, hence, a posterior probability for each model.
To implement this approach in practice we require sequential Monte
Carlo methods that have excellent computational and Monte Carlo
properties. We achieve this by leveraging the conditional independence
structure in the model, where given knowledge of which motor units fired as a result of a particular stimulus,
parameters that specify the size of each unit's response are independent of the 
parameters defining the probability that a unit will respond at
all. The scalability of our methodology relies on the natural
conjugacy structure that we create for the former and an enforced, approximate conjugate structure
for the latter. A simulation study demonstrates the
accuracy of our method, and inferences are consistent across two
different datasets arising from the same rat tibial muscle.

\section*{Keywords}
Motor Unit Number Estimation; Sequential Monte Carlo; Model Selection

\newpage

\section{Introduction} \label{sec:Intro}

Motor unit number estimation (MUNE) is a continuing challenge for
clinical neurologists. An ability to determine the number of motor units (MUs)
that operate a particular muscle provides important insights into the
progression of various neuromuscular ailments such as amyotrophic lateral
sclerosis \citep{She06, Bro07}, and aids the assessment of the
efficacy of potential therapy treatments \citep{Cas10}.

A MU is the fundamental component of the neuromuscular system and 
 consists of a single motor neuron and
the muscle fibres whos contraction it governs. Restriction to a MU's
operation may be a result of impaired communication between the motor
neuron and muscle fibres, abnormaility in their function, or atrophy of
either cell type. A direct investigation into the number of MUs via a biopsy, for example, is not helpful since this only determines the presence of each MU, not its functionality.

Electromyography (EMG) provides a set of electrical stimulii of
varying intensity to a group of motor neurons; each stimulus
artificially induces a twitch in the targeted muscle, providing an
\textit{in situ} measurement of the functioning of the MUs. The effect on the
muscle may be measured by recording either the minute variation in muscle
membrane potential or the physical force the muscle exerts
\citep{Maj05}. The generic methods developed in this article are
applicable to either type of measurement. Since our data consist of whole muscle
twitch force (WMTF) measurements we henceforth describe the response
in these terms. 
 In a healthy subject, the
stimulus-response curve is typically sigmoidal \citep{Hen06},
illustrating the smooth recruitment of additional MUs as the stimulus
increases; however, the relatively low number of MUs in a patient with impaired muscle function may manifest within the stimulus-response
relationship as large jumps in WMTF measurements.

Figure \ref{fig:RatData} shows the two data sets that will be
described and analysed in detail 
in Section \ref{sec:CaseStudy}, with the large jumps clearly
visible. The histograms of absolute differences in response for
adjacent stimuli show two main modes, one, near 0\,mN, corresponding to
noise and the other, around 40\,mN indicating that different MUs fired. The noise arises primarily because of small 
variations in the
contribution to the WMTF provided by any particular MU, whenever it
fires. The second general source of noise, visible in isolation at very low
stimuli when no MUs are firing, is called the baseline noise. This
arises from 
respiration movements and pulse pressure waves,
and particular care is taken to minimise such influences, for example by  earthing the subject and equipment, restraining the limb, digitally resetting the force signal prior to each stimulus, synchronising stimuli with the pulse cycle and using highly sensitive measurement devices.

MUNE uses the
observed stimulus-response pattern to estimate the number of
functioning MUs. 
Techniques for MUNE generally form two classes: the average and
comprehensive approaches. The most common averaging approach is the
incremental technique of \cite{McC71}, which 
assumes that the MUs can be characterised by an `average' MU
with a particular single motor unit twitch force (MUTF), estimated as
the average of the magnitudes of the observed stepped increases in twitch
force. A large
stimulus, known as the supramaximal stimulus, is applied in order to
cause all MUs to react. The quotient of the WMTF arising from the supramaximal
stimulus and the average MUTF provides a count estimate. However, there is no guarantee that a particular single-stepped increase in
response corresponds to a new, previously latent, MU, since it may
instead be due to a phenomenon called alternation \citep{Bro76}. This
occurs when two or more MUs have similar activation thresholds such
that different combinations of MUs may fire in reaction to two
identical stimuli. Consequently, the incremental technique tends to
underestimate the average MUTF and hence overestimate the  number of
MUs. A number of improvements both experimentally
\citep[e.g.]{Kad76,Sta94} and empirically \citep[e.g.]{Dau95,Maj07}
have been proposed to try to deal with the alternation problem but,
despite these improvements, each method oversimplifies the data
generating mechanism and there is no gold-standard averaging
approach; \citet{Bro07} and \citet{Goo14} provide thorough discussions on these approaches to MUNE.

\begin{figure}
  \centering
  \includegraphics[width=0.8\textwidth]{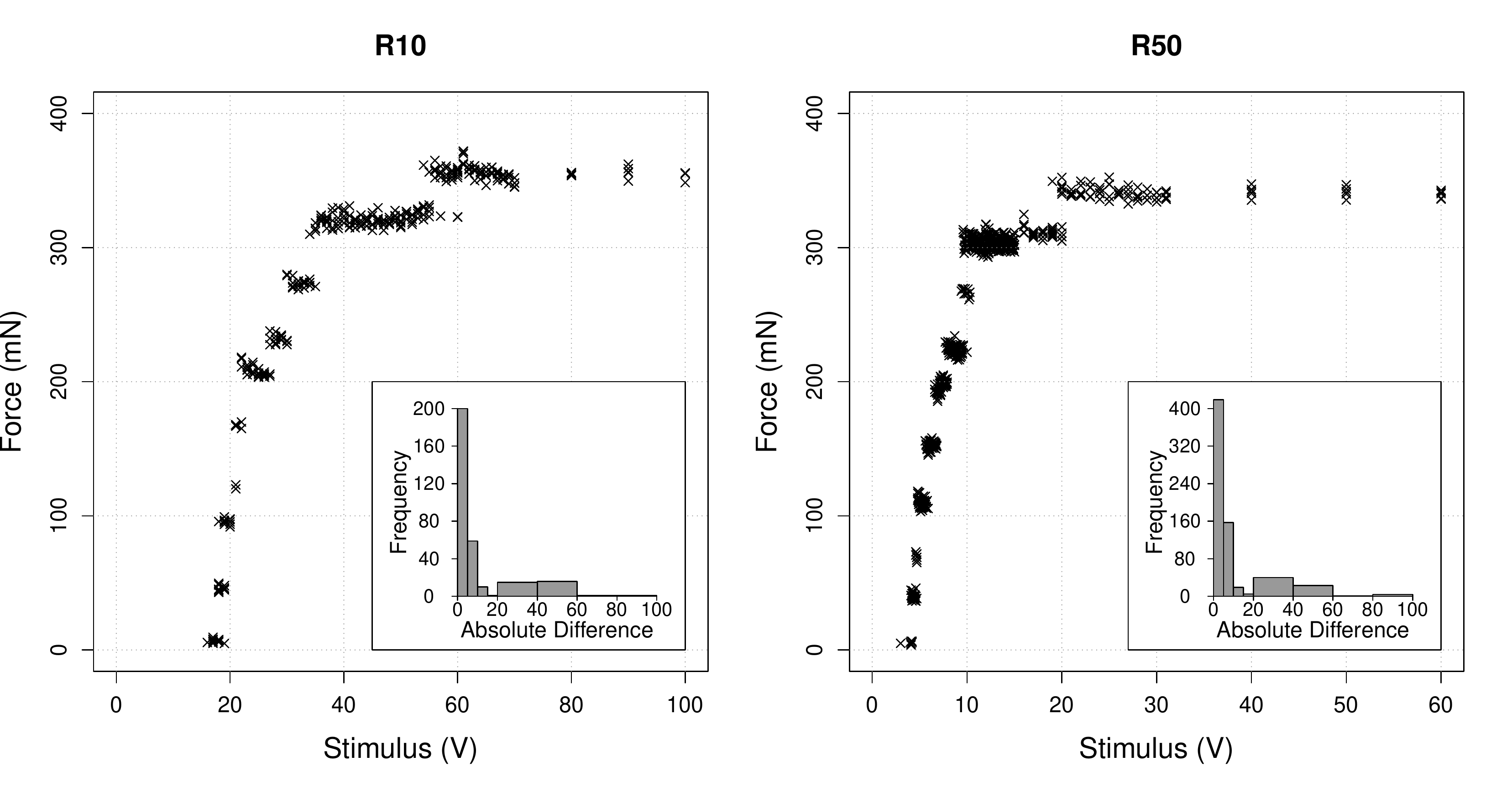}
  \caption{Stimulus-response curve from a rat tibial muscle using 10\,{\micro}sec (left) and 50\,{\micro}sec duration stimuli. Histogram inserts represent the frequency in the absolute difference of twitch forces when ordered by stimulus.}
  \label{fig:RatData}
\end{figure}

 Motor units are more
diverse than simple replicates of the `average' MU, with many factors
influencing their function. A desire for a more complete model for the
data generating mechanism motivated the comprehensive approach
to MUNE in \citet{Rid06}, which proposed three assumptions:
\begin{itemize}
  \item[\textbf{A1}] MUs fire independently of each other and of
    previous stimuli in an all-or-nothing response. Each MU fires
    precisely when the stimulus intensity exceeds a random
    threshold whose distribution is unique to that MU, with a
    sigmoidal cumulative distribution function, called an excitability curve.
  \item[\textbf{A2}] The firing of a MU is characterised by a MUTF which is independent of the
    size of the stimulus that caused it to fire, and has a Gaussian distribution
    with an expectation specific to that MU and a variance common to
    all MUs.
  \item[\textbf{A3}] The measured WMTF is the superposition of the
    MUTFs of those MUs that fired, together with a baseline component
    which has a Gaussian distribution with its own mean and variance.
\end{itemize}
From these assumptions, \citet{Rid06} proposed a set of similar statistical
models each of which assumed a different \textit{fixed} number of
MUs. MUNE thus reduced to selection of a best model, for which the 
 Bayesian information criterion was used.
 The class of methods which performs MUNE within a
Bayesian framework is commonly referred to
as Bayesian MUNE. In a subsequent paper, \citet{Rid07} extended the
method by constructing a reversible jump Markov chain Monte Carlo
(RJMCMC) \citep{Gre95} to sample from the MU-number posterior mass
function directly. However, its implementation is highly challenging
with slow and uncertain convergence particularly when the studied
muscle has many MUs. This is partly attributed to difficulty in
defining efficient and meaningful transitions between models, with
transition rates found to be 0.5--2\% \citep{And07}. The between model
transition rate was improved in \citet{Dro14} where it was noticed
that under Assumption A1, for a given stimulus, the majority of MUs are either almost
certain to fire or almost certain to not fire. Approximating this near
certainty by absolute certainty led to a substantial reduction in the
size of the sample space. The approximate sample space for the firing
events was sufficiently small to permit marginalisation in the
calculation of between-model transition probabilities, increasing the
acceptance rate to 9.2\% with simulated examples. Nevertheless,
substantial issues over convergence remain as the parameter posterior
distributions for models with more than the true number of MUs are multimodal.

In this paper, slight alterations of the neuromuscular assumptions
permit the development of a fully adapted sequential Monte Carlo (SMC)
filter, leading to SMC-MUNE, the first Bayesian MUNE method compatible with real-time analysis.
 As in \citet{Rid06}, the principal inference targets are separate estimates of the marginal likelihood for models with $u=1, \ldots, \Umax$ MUs, for some maximum size $\Umax$. 

The paper proceeds as follows. Section~\ref{sec:Model} presents the
neuromuscular model of \citet{Rid06} for a fixed number of MUs and
defines the priors for the model parameters. Section~\ref{sec:Method}
describes the SMC-MUNE method. Due the complexity of the problem that
MUNE addresses, this section is broken into three parts: inference for
the firing events and associated parameters; inference for the
parameters of the baseline and MUTF processes; and, estimation of the
marginal likelihood so as to evaluate the posterior mass function for
MU-number. Section~\ref{sec:SimStudy} assesses the performance of the
SMC-MUNE method for $200$ simulated data sets. Closer examination of
cases where the point estimate of the  number of MUs was incorrect revealed two
classes of error; an example in each of
these classes is investigated in detail.
Section~\ref{sec:CaseStudy} applies the SMC-MUNE method
to data (collected using the method in \citep{Cas10}) from a rat
tibial muscle that has undergone stem cell
therapy. Section~\ref{sec:Discussion} concludes the paper with a
discussion on the effectiveness of SMC-MUNE and of potential avenues
for improvement.


\section{The neuromuscular model and prior specification}\label{sec:Model}

The three assumptions A1--A3 underpin a comprehensive description of
the neuromuscular system. This section expands on these assumptions to
form the model of the neuromuscular system for a given fixed number
of MUs. Section~\ref{sec:Notation} introduces the notational
convention. Section~\ref{sec:Neuro-model} presents the neuromuscular
model under the assumptions of \citet{Rid06}, and Section~\ref{sec:PriorDist} defines the prior distributions for the model parameters.

\subsection{Notation}\label{sec:Notation}


The total number of MUs operating the muscle of interest is denoted by
$u$ and a particular MU is indexed by $j$.
An EMG data set consists of $T$ measurements whereby the datum for the
$t$th test, $t=1,\ldots,T$, consists of the applied stimulus $s_t$ and
resulting WMTF $y_t$. The data set is re-ordered such that the
observation $y_1, \ldots, y_{\tau-1}$ define baseline measurements
with $s_t=0$ for $t=1, \ldots, \tau-1$, followed by an overall WMTF
$y_\tau$ corresponding to the supramaximal stimulus 
$s_\tau=\max_t(s_t)$ 
where all $u$ MUs are known (by the clinician) to have fired. The
remaining measurements appear in order of increasing stimulus.
The advantages of this
ordering will become evident in Section~\ref{sec:DetailObsProc}.

The reaction of MU $j$ to stimulus $s_t$ is denoted by the indicator
variable $x_{j,t}$, which is $1$ if MU $j$ fires, and hence
contributes to the $y_t$ measurement, and $0$ otherwise. The
$u$-vector of indicators $\bx_t=(x_{1,t},\ldots,x_{u,t})^\top$ defines
the firing vector of the MUs in response to stimulus $s_t$. Given the
experimental set-up, it is assumed that no MUs fire for any baseline
measurement, $x_{j,t}=0$ for each $j=1,\ldots,u$ and $t=1,\dots \tau-1$, and all MUs fire in response to the supramaximal stimulus, $x_{j,\tau}=1$.

A sequentially indexed set of elements, vectors or scalars shall be
represented as $a_{1:t} := \{a_1, \ldots, a_t\}$. The vectors where
all elements are zero or all are unity are denoted by $\zero$ an $\unit$ respectively. The indicator function $\mathbb{I}_{A}(x)$ is $1$ if $x \in A$ and $0$ otherwise.

\subsection{The neuromuscular model}\label{sec:Neuro-model}

Following the assumptions A1--A3 of \citet{Rid06}, the state-space neuromuscular model for the WMTF observations based on a fixed $u$ number of MUs is as follows.
\begin{align}
  X_{j,t} | s_t, \eta_j, \lambda_j & \sim \mathrm{Bern}\left[ F \left(s_t; \eta_j, \lambda_j\right) \right], \label{eq:StateProc}\\
	Y_{j,t} | \mathbf{X}_t = \bx_t, \mub, \nub, \bmu, \nu & \sim \mathrm{N}\left(\mub + \bx_t^\top\bmu, \nub^{-1} + \nu^{-1}\bx_t^\top\unit\right). \label{eq:ObsProc}
\end{align}
The WMTF in \eqref{eq:ObsProc} is the sum of independent Gaussian
contributions, firstly, from a baseline effect of $N(\mub,\nub^{-1})$
and, secondly, from each
MU that fires. If the $j$th MU fires then it makes a
$N(\mu_j,\nu^{-1})$ contribution to the WMTF. The parameters $\bmu =
(\mu_1, \ldots, \mu_u)^\top$, $\nu$, $\mub$, $\nub$ are collectively
referred to as the \emph{observation parameters}. Each firing event in \eqref{eq:StateProc},
$X_{j,t}$, is a Bernoulli random variable with success probability
given by a sigmoidal function $F$ of the stimulus, called the
\emph{excitability curve} \citep{Bro76}. The \emph{excitability parameters}
for the $t$th MU, $\eta_j$
and $\lambda_j$, characterise its excitation features;
conditional on these values, firing events are independent. The
acyclic graph in Figure~\ref{fig:MUNEdag} depicts the dependencies
within the neuromuscular model. Key to the strategy in this paper is
that the observational and excitability parameters are conditionally
independent given the unobserved firing events $\bx_{1:T}$.

\begin{figure}[ht]
  \centering
  \includegraphics[width=0.8\textwidth]{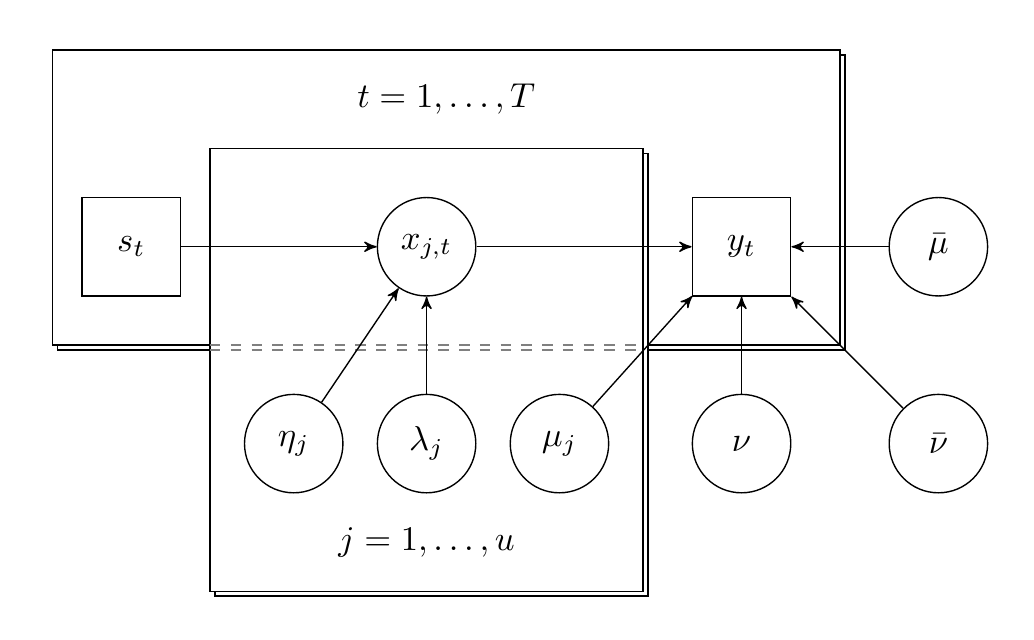}
  \caption{Directed acyclic graph of the neuromuscular model for a
    fixed number of motor units, $u$. Arrows denote direct dependencies between known data (square nodes) and unknown parameters and states (circle nodes). Pallets indicate repeated cases according to the stated index.}
  \label{fig:MUNEdag}
\end{figure}

The excitability curve is a non-decreasing
sigmoid function of the stimulus, parameterised by its median, $\eta$
and the reciprocal gradient at the median: 
$F(s=\eta;\eta,\lambda) = 1/2$, and
$F'(s=\eta;\eta,\lambda) = 1/\lambda$. Under
assumption A1, \citet{Rid06} specifies the excitability curve as the
Gaussian cumulative distribution function (CDF): $F(s) =
\Phi[\delta(s-\eta)]$ where $\Phi(x)$ denotes the standard Gaussian
CDF with $\delta = \sqrt{2\pi}/\lambda$.
Evidence for this definition
\citep{Hal04} focused on the central structure of the excitability
curve by applying a binned chi-squared goodness-of-fit test. However,
evidence to distinguish between this and alternatives such as the logistic
curve will arise, chiefly, from tail events. Moreover, the Gaussian
assumption allows a small, albeit potentially negligible, probability
of a spurious firing event when no stimulus is applied. Given this
contradiction with the experimental design, the following log-logistic form of the excitability is used:
\begin{align}
  F(s; \eta, \lambda) = \left[1 + \left(\frac{s}{\eta}\right)^{-4\eta/\lambda}\right]^{-1}. \label{eq:ECdef}
\end{align}
Nonetheless, the inference method described in Section~\ref{sec:DetailFireProc} is
applicable for any sigmoidal curve.

\subsection{Prior distributions}\label{sec:PriorDist}

The excitability parameters of individual MUs are assumed to be
independent \emph{a priori}. For some upper limits $\etamax$ and $\lambdamax$, the excitability parameters are assigned vague independent beta prior distributions:
\begin{align}
  \frac{\eta}{\etamax} \sim \mathrm{Beta}(1.1,~ 1.1), \quad\quad \frac{\lambda}{\lambdamax} \sim \mathrm{Beta}(1.1,~ 1.1). \label{eq:ECprior}
\end{align}
The shape parameters are chosen so that the densities are
uninformative yet tail off towards the boundaries. The location upper
bound is conservatively set just greater than the supramaximal
stimulus, $\etamax = 1.1s_\tau$. Evidence for specifying the upper bound $\lambdamax$ is
taken from \citet{Hal04} where, for a Gaussian excitability curve, the
coefficient of variation of a random variable whose cumulative
distribution function is given by the excitability curve was
estimated to be 1.65\%. With the log-logistic curve this corresponds to
$\lambda/\eta \approx 3.64\%$. Given that $\eta\le \eta_{max}=1.1s_{\tau}$, we deduce
that $\lambda \le 0.04 s_\tau$. 
The
limitations of the the study of \citet{Hal04}, commented on by
\citet{Maj07}, indicate that a larger bound may be required than
initially suggested, so sensitivity of MUNE to $\lambdamax$ is
investigated in Sections~\ref{sec:SimStudy_under} and \ref{sec:CaseStudy}.

The following pair of four-parameter (multivariate) Gaussian-gamma prior distributions are specified for the observation parameters:
\begin{align}
  \nub & \sim \mathrm{Gam}\left(\ab_0,~ \bb_0\right), \quad\quad \mub|\nub  \sim \mathrm{N}\left(\mb_0,~ \nub^{-1}\cb_0\right),\nonumber\\
  \nu  & \sim \mathrm{Gam}\left(a_0,~ b_0\right),     \quad\quad \bmu|\nu   \sim \mathrm{MVN}_u\left(\M_0,~ \nu^{-1}C_0\right).\label{eq:ObsPrior}
\end{align}
All hyper-parameters are strictly positive scalars except for the
real-valued scalar expectation $\mub_0$, $u$-vector $\M_0$ and $u
\times u$ positive definite matrix $C_0$. The prior distributions
defined for the precision parameters are consistent with
\citet{Rid06}. However, the prior for both baseline and MUTF
expectations differ from the gamma definition of \citet{Rid06}. The
tractability reasons for adopting Gaussian rather than gamma  priors
are detailed in Section~\ref{sec:DetailObsProc}; the problems that
arise from the support now including the whole real line are addressed in Section~\ref{sec:ML}.

The range of MUs to consider, $u=1, \ldots, \Umax$, defines a set of
neuromuscular models. Previous Bayesian MUNE methods defined a uniform
prior on the model space in assuming that each is equally
probable. However, there is typically a preference for identifying the
simplest representation of the underlying process. This is of
particular importance in the presence of alternation where the data
could be equally probably under two or more models. To 
impose an \textit{a priori} preference for smaller models the number
of MUs is given a $\mathsf{Geom}(1/2)$ distribution, truncated at $\Umax$.

\section{Methodology for SMC-MUNE}\label{sec:Method}

The methodology that defines the SMC-MUNE procedure detailed in this
section is based on an approximation to the ideal model defined in
Section~\ref{sec:Model} using, effectively, an approximation to the
prior specification. The reasons for the approximations are twofold:
firstly, the choice of prior is necessary for certain tractable
operations but does not reflect true prior belief; secondly, prior
specification does not lead conveniently and efficiently to sequential
inference yet a simple approximation achieves this goal.   An overview of
the methodology 
 is first
provided, with details about each part given subsequently. Adapting
terminology from sequential inference, the re-ordered index $t$ shall henceforth
be referred to as `time'.

\subsection{Overview}\label{sec:Overview}

The ultimate
aim is to calculate and compare the posterior model probabilities for a
range of models, each with a different number of MUs, $u$. 
Posterior model probabilities are straightforward to obtain once the
marginal likelihood for each model is available. Hence, for a given
model with $u$ MUs, the target for
inference is its marginal likelihood, $f(y_{1:T}|s_{1:T})$; 
throughout this section, for notational simplicity, we suppress the dependence on $u$. This can
be expressed as a product of sequential predictive factors with each defined by:
\begin{align}
  f\left(y_t |~ y_{1:t-1},~ s_{1:t}\right) & = \sum_{\bx_{1:t-1}\in\mathcal{X}_{1:t-1}} f\left(y_t|~ \bx_{1:t-1},~ y_{1:t-1},~ s_{1:t}\right) \mathbb{P}\left(\bx_{1:t-1} |~ y_{1:t-1},~ s_{1:t-1}\right)\label{eq:SeqPredFactor}
\end{align}
where $\mathcal{X}_{1:t} = \{0,1\}^{ut}$ denotes the space for the
sequence of vectors of historical firing events.

The inference scheme is based upon two key observations. Firstly, the
observation and excitability parameters are conditionally independent
given the set of firing events $\bx_{1:T}$. Such an independence
structure separates the observational and firing processes and
simplifies the marginalisation of the parameter space for evaluating
the marginal likelihood. Secondly, conditional on $\bx_t$, the priors
for the observation parameters in \eqref{eq:ObsPrior} are nearly
conjugate for the likelihood in \eqref{eq:ObsProc}.
For a baseline measurement (which has $\bx_t=\zero$) the
posterior has the same form as the prior with tractable 
updates; the same would
be true for a non-baseline measurement ($\bx_t\neq\zero$) if it
were possible to set $\nub^{-1}=0$ and to ignore
the further information on $\SSbar$; such an approximation is
described and justified in Section \ref{sec:DetailObsProc}.

Subject to this approximation, the posterior distribution for the observational
parameters after assimilating $y_{1:t}$ and conditional on $\bx_{1:t}$
is defined by the sufficient statistics and (multivariate)
Gaussian-gamma distributions analogous to the prior specification:
\begin{align}
  \SSbar_t := \left\{\ab_t,~ \bb_t,~ \mb_t,~ \cb_t\right\} \quad \mathrm{and} \quad 
	\SStat_t := \left\{a_t,~ b_t,~ \M_t,~ C_t\right\}, \label{eq:SSdef}
\end{align}
\begin{align}
  \nub|y_{1:t},\bx_{1:t} & \sim \mathrm{Gam}\left(\ab_t,~ \bb_t\right), \quad\quad \mub|\nub,y_{1:t},\bx_{1:t}  \sim \mathrm{N}\left(\mb_t,~ \nub^{-1}\cb_t\right),\nonumber\\
  \nu|y_{1:t},\bx_{1:t} & \sim \mathrm{Gam}\left(a_t,~ b_t\right), \quad\quad \bmu|\nu,y_{1:t},\bx_{1:t}  \sim \mathrm{MVN}_u\left(\M_t,~ \nu^{-1}C_t\right).\label{eq:ObsPriorApx}
\end{align}

Given the assumptions leading to \eqref{eq:ObsPriorApx} the marginal likelihood for the observation $y_t$ conditional on the firing vector $\bx_t$ and sets $\SSbar_{t-1}$ and $\SStat_{t-1}$ has tractable form:
\begin{align}
  f\left(y_t|\bx_t,\SSbar_{t-1},\SStat_{t-1}\right) & = \left\{
  \begin{array}{ll}
    \mathsf{t}\left[y_t;~ \mb_{t-1},~ \frac{\bb_{t-1}}{\ab_{t-1}}\left(\cb_{t-1}+1\right),~
		2\ab_{t-1}\right] \quad & \mathrm{if} ~ \bx_t=\zero,\\
    \mathsf{t}\left[y_t;~ \mb_{t-1}+\bx_t^\top\M_{t-1},~ 
		\frac{b_{t-1}}{a_{t-1}}\left(\bx_{t}^\top C_{t-1}\bx_{t}+\bx_{t}^\top\unit\right),
		~2 a_{t-1}\right] \quad & \mathrm{otherwise}.\\
  \end{array}
  \right.\label{eq:ObsMarginal}
\end{align}
Here, $\mathsf{t}(y; m, v, n)$ denotes the Student's t-density
function on $n$ degrees of freedom with centrality parameter $m$ and
scaling factor $\sqrt{v}$. The statistics $\SSbar_{t-1}$ and
$\SStat_{t-1}$ are deterministic functions of $y_{1:t-1}$ and
$\bx_{1:t-1}$, and are sufficient in that $f(y_t|\bx_{1:t},y_{1:t-1}) \equiv f(y_t|\bx_{t},\SSbar_{t-1},\SStat_{t-1})$.

The 
posterior-predictive mass function for the next excitation vector,
  $\mathbb{P}\left(\bx_t|\bx_{1:t-1},y_{1:t-1},s_{1:t}\right)$\\ 
  $= \mathbb{P}\left(\bx_t|\bx_{1:t-1},s_{1:t}\right)$,
 is given by the following intractable marginalisation:
\begin{align}
\mathbb{P}\left(\bx_t|\bx_{1:t-1},s_{1:t}\right) = \int \mathbb{P}\left(\bx_t|\eta_{1:u},\lambda_{1:u},s_t\right) \pi\left(\eta_{1:u},\lambda_{1:u}| \bx_{1:t-1},s_{1:t-1}\right)~d\eta_{1:u}~d\lambda_{1:u}, \label{eq:ECmarg}
\end{align}
where $\pi\left(\eta_{1:u},\lambda_{1:u}|
\bx_{1:t-1},s_{1:t-1}\right)$ is the posterior for the excitability
parameters given the firing vectors to time $t-1$.
Section~\ref{sec:DetailFireProc} presents a fast numerical quadrature
scheme for evaluating \eqref{eq:ECmarg} to any desired
 accuracy.

The marginalisations over the parameters in \eqref{eq:ObsMarginal} and
\eqref{eq:ECmarg} together provide the predictive:
\begin{align}
  f\left(y_t|~ \bx_{1:t-1},~ y_{1:t-1},~ s_{1:t}\right) & = \sum_{\bx_t\in\mathcal{X}_t} f\left(y_t|~ \bx_{1:t},~ y_{1:t-1},~ s_{1:t}\right) \mathbb{P}\left(\bx_{t} |~ \bx_{1:t-1},~ s_{1:t}\right),\label{eq:Ypred}
\end{align}
Combination of \eqref{eq:Ypred} with the historical firing event mass
function $\mathbb{P}\left(\bx_{1:t}|y_{1:t},s_{1:t}\right)$ would
provide the quantity $f(y_t|y_{1:t-1},~s_{1:t})$ in \eqref{eq:SeqPredFactor}
as required; however,
it is infeasible to track
$\mathbb{P}\left(\bx_{1:t}|y_{1:t},s_{1:t}\right)$ as the dimension of
the event space increases exponentially with time.
Instead, combining 
\eqref{eq:ObsMarginal}, \eqref{eq:ECmarg} and \eqref{eq:Ypred} gives
the conditional mass function for the current firing vector given all
previous firing vectors and all MUTFs to date,
\begin{align}
  \mathbb{P}\left(\bx_{t} |~ y_{1:t},~ \bx_{1:t-1},~ s_{1:t}\right) & = 
	  \frac{f\left(y_t|~\bx_{1:t},~y_{1:t-1},~s_{1:t}\right) \mathbb{P}\left(\bx_t|~\bx_{1:t-1},~s_{1:t}\right)}{f\left(y_t|~\bx_{1:t-1},~y_{1:t-1},~s_{1:t}\right)}. \label{eq:Xupdate}
\end{align}
Expressions in \eqref{eq:Ypred} and \eqref{eq:Xupdate} together 
lead to a fully adaptive sequential Monte Carlo (SMC) sampler which
approximates the historical firing event mass function by the particle
set $\left\{\bx_{1:t}^{(i)}\right\}_{i=1}^{N}$, for a  suitably large
$N$, recursively updating the set for $t=1,\dots,T$.
Algorithm~\ref{tab:Alg}
presents the auxiliary SMC sampler \citep{Pit99} which, given the set of samples drawn from
$\mathbf{X}_{1:t-1}|~y_{1:t-1},~ s_{1:t-1}$,  creates an
unweighted sample from the filtering distribution
$\mathbf{X}_{1:t}|y_{1:t},s_{1:t}$, and approximates
\eqref{eq:SeqPredFactor} via Monte Carlo so as to update the marginal likelihood estimate $\hat{f}(y_{1:t}|~s_{1:t})$.
\begin{algorithm}
\caption{Fully adapted SMC sampler}\label{tab:Alg}
\begin{algorithmic}[1]
 \For{$i$ in $1,\dots,N$} \Comment{Weight}
 \State $\omega_t^{(i)}=f(y_t|\bx_{1:t-1}^{(i)},~y_{1:t},~s_{1:t})$
 \Comment{Using \eqref{eq:Ypred}}
 \EndFor 
 \State $\bar{\omega}_t^{(i)}=\omega_t^{(i)}/\sum_{k} \omega_t^{(k)}$.
 \State Sample auxiliary indices $\{\zeta^{(i)}\}_{i=1}^N$ with
 probabilities $\{\bar{\omega}_{t}^{(i)}\}_{i=1}^N$. \Comment{Resample}
 \For{$i$ in $1,\dots,N$} \Comment{Propagate}
 \State Sample $\bx_t^{(i)}$ with probability
 $\mathbb{P}\left(\bx_t|~y_t,~\bx_{1:t-1}^{(\zeta_i)},~s_{1:t}\right)$. \Comment{Using
 \eqref{eq:Xupdate}}
\State Set $\bx_{1:t}^{(i)} = \left(\bx_{1:t-1}^{(\zeta_i)},~
  \bx_t^{(i)}\right)$. 
\EndFor
\State Set $\log \hat{f}\left(y_{1:t}|~s_{1:t},~u\right) =
\log \hat{f}\left(y_{1:t-1}|~s_{1:t-1},~u\right) - \log N + \log \sum_i
\omega^{(i)}_t$.
\Comment{Update marginal likelihood}
\end{algorithmic}
\end{algorithm}

Although primary interest lies in the marginal-likelihood estimate,
parameter inference is also available to assist in assessing the
quality of fit. The deterministic map to the sufficient statistics from
the set of firing events and responses permits the transformation from
the final particle set $\{X_{1:T}^{(i)}\}_{i=1}^{N}$ to an
$N$-component Gaussian-gamma mixture approximating the posterior
distribution for the observation parameters. A similar transformation
for evaluating the posterior distribution for the excitability
parameters is derived from the approximation to the prior; see
Section~\ref{sec:DetailFireProc} for details.

\subsubsection{Equivalent particle specification and degeneracy}

The Bayesian conjugate structure for the observation process suggests
storing and updating the sufficient statistics when assimilating the
latest observations. Given the prior statistics $\SSbar_0$ and
$\SStat_0$, there is a deterministic map from $(\bx_{1:t-1},~
y_{1:t-1})$ to $(\SSbar_{t-1},~\SStat_{t-1})$. Hence estimates
relating to the observation process at time $t$ are equivalently
expressed with respect to the samples
$\{\SSbar_{t-1}^{(i)},~
\SStat_{t-1}^{(i)},~ \bx_t^{(i)}\}_{i=1}^N$; the storage required for
this set does not increase with with number of observations
assimilated. Since the method relies on these sufficient statistics,
Algorithm \ref{tab:Alg} may be considered as a case of particle
learning \citep{Car10}. For notational clarity, however, the particle
set is described in terms of the historical firing events,
$\bx_{1:t-1}$, unless otherwise specified.

Assimilating the observation, $y_{\tau}$, at the supramaximal
stimulus, $s_{\tau}$, before any of the other non-baseline
observations ensures an update for \emph{all} MUs, $j$, from the initial
vague priors for each $\mu_j$. After this, each $m_j\approx
y_{\tau}/u$, ensuring more sensible predictions in \eqref{eq:ObsMarginal} and hence
\eqref{eq:Ypred}, when a new MU fires. This helps to mitigate against
the inevitable particle degeneracy that occurs with particle
learning. Further mitigation is achieved by iteratively re-running the
algorithm with more and more particles until inferences are stable
(see Appendix \ref{sec:APX}).

\subsection{Details for the firing vector and excitability parameters} \label{sec:DetailFireProc}

At time $t-1$, each particle sample consists of a historical sequence
of firing events for all MUs, and from this an associated joint
posterior for the firing parameters, $\eta_{1:u}$ and $\lambda_{1:u}$,
is derived. A representation of the distribution for the excitability
parameters is sought that is analogous to that described for the
observation parameters in that it should (a) permit simple calculation of the firing event predictive \eqref{eq:ECmarg}, (b) be deterministically updatable when assimilating the current measurement, and (c) provide a concise and sufficient description for the posterior distribution.

From the independence of MU firing under Assumption~A1 and the excitability parameter prior in \eqref{eq:ECprior}, it follows that the predictive for the firing event $\bx_t$ in \eqref{eq:ECmarg} factorises:
\begin{align}
  \mathbb{P}\left(\bx_t|~ \bx_{1:t-1},~ s_{1:t}\right) & = \prod_{j=1}^{u} \iint \mathbb{P}\left(x_{j,t}|~\eta_j,~\lambda_j,~s_t\right) \pi\left(\eta_j,~ \lambda_j|~x_{j,1:t-1},~s_{1:t-1}\right) d\eta_j~d\lambda_j,\label{eq:ECj_marg}
\end{align}
where the posterior at time $t-1$ for the excitability parameters associated with MU $j$ is:
\begin{align}
  \pi\left(\eta_j,~ \lambda_j|~x_{j,1:t-1},~s_{1:t-1}\right) & \propto \prod_{r=1}^{t-1} \mathbb{P}\left(x_{j,r}|~\eta_j,~\lambda_j,~s_r\right) \pi\left(\eta_j\right) \pi\left(\lambda_j\right).\label{eq:ECj_post}
\end{align}
Regardless of the excitability curve definition, this product of firing
probabilities does not lead to a simple conjugate structure with a
concise set of sufficient statistics for the posterior
distribution. Furthermore, whilst for specific values of
$(\eta_j,\lambda_j)$ the update in \eqref{eq:ECj_post} may be
performed sequentially, the integrations for the normalising constant
in \eqref{eq:ECj_post} and the expectation in \eqref{eq:ECj_marg}
require evaluation of the product at arbitrary values in a continuum.
To address these issues, the following approximation is proposed:
\begin{itemize}
  \item[\textbf{B1}] For each MU, store and update at each time point
    a surface proportional to the posterior density
    $\pi(\eta_j,\lambda_j|x_{j,1:t-1},s_{1:t-1})$ at a set of grid of
    points on a regular rectangular lattice $\mathcal{G}$ spanning the
    excitability parameter space. For general $(\eta_j,\lambda_j)$,
    approximate the right-hand side of \eqref{eq:ECj_post} using bilinear interpolation from the four
    nearest grid points.
\end{itemize}
Under this assumption, let $h(\eta,~\lambda)$ be the right-hand side
of \eqref{eq:ECj_post}; then $\tilde{h}(\eta,~\lambda)$, the
interpolated surface specified using points on the unit square in
which $(\eta,\lambda)$ resides, is:
\begin{align*}
  \tilde{h}(\eta,\lambda) & = (1-\eta)(1-\lambda)h(0,0) + (1-\eta)\lambda h(0,1) + \eta(1-\lambda)h(1,0) + \eta\lambda h(1,1), 
\end{align*}
with a similar approximation for
$\mathbb{P}\left(x_{j,t}|\eta_j,\lambda_j,s_t\right) h(\eta_j,
\lambda_j)$ based on interpolating this between grid points.
The resulting approximations for the normalising constant in
\eqref{eq:ECj_post} and the integral in \eqref{eq:ECj_marg},
therefore, correspond to iterative (over the two dimensions) 
application of the compound trapezium rule.  This approach provides a deterministic updating procedure for maintaining the excitability posterior density up to a constant of proportionality for each point on the regular lattice.


A na\"ive implementation of the above scheme would evaluate the
posterior density for each grid point, MU and particle
sample. However, two posterior densities will only differ if the
historical firing events differ. Consider any two particles, $i$ and
$i'$, each with an associated MU, $j$ and $j'$ respectively, that have
identical firing histories: $x_{j,1:t}^{(i)} =
x_{j',1:t}^{(i')}$. Since the priors for the excitability parameters
are identical for all MUs then the posterior distribution for these
two MUs on these two particles are identical. Efficiency gains are
therefore achieved by storing a single grid of values for each unique
firing pattern to date. 

A higher-order Newton-Cotes numerical integration method would produce
a more accurate estimate of \eqref{eq:ECj_marg}, but the associated
interpolated density surface of piecewise polynomials would not be
guaranteed to be bounded below by zero, making an inspection of parameter estimates for assessing model fit problematic. Alternatively, quadrature on adaptive sparse grids \citep{Bun03}, where the grid is finer at regions of high curvature, could improve estimator accuracy over the static regular rectangular lattice. However, this would be achieved at the expense of additional implementation complexity and further approximation error when estimating the surface at infilled lattice points.

\subsection{Details concerning the observation
  process} \label{sec:DetailObsProc}
Consider the observation model \eqref{eq:ObsProc}.
At time $t\le \tau-1$, when no MUs fire, $\bx_t = \zero$, the observation, $y_t$,
provides no new information about the observation parameters for the
MUs, $\SStat_t = \SStat_{t-1}$, and
$Y_{j,t}|\bx_t=0,\mub,\nub,\bmu,\nu\sim \mathrm{N}(\mub,\nub^{-1})$.
Standard conjugate updates may, therefore, be applied to obtain $\SSbar_t$ as follows:
\begin{align*}
  \mb_t = \mb_{t-1} + \frac{y_t-\mb_{t-1}}{1+\cb_{t-1}},\quad
	\cb_t = \frac{\cb_{t-1}}{1+\cb_{t-1}},\quad
	\ab_t = \ab_{t-1} + \frac{1}{2},\quad
	\bb_t = \bb_t + \frac{(y_t-\mb_{t-1})^2}{2(1+\cb_{t-1})}. 
\end{align*}

When $t\ge \tau$, at least one MU fires and tractable updates are not
possible. However, in real experiments, because of the precautions
detailed in Section \ref{sec:Intro},
the variance (and expectation) of the baseline noise are
generally much smaller than the variability in response from a given
MU when it fires. For example
\cite{Hen06} find a ratio of an order of magnitude. We, therefore
make the following approximation:

\begin{itemize}
  \item[\textbf{B2}] When assimilating a non-baseline observation,
  $\SSbar$ is kept fixed at its previous value, and for updating
    $\SStat$ it is assumed that $\nub^{-1}=0$.
\end{itemize}

Approximation B2 implies that for $\bx_t\neq\zero$,
\begin{align*}
  Y_t|~ \nub,~ \bmu,~ \nu,~ \SSbar_{t-1},~ \SStat_{t-1} & \stackrel{apx.}{\sim} 
	  \mathrm{N} \left(~\mb_{t-1} + \bx_t^\top\bmu,~ \nu^{-1}\bx_t^\top\unit~ \right), 
\end{align*}
which, given distributions at time $t-1$ as specified in
\eqref{eq:ObsPriorApx}, leads to the desired tractable updates for the sufficient
statistics for \eqref{eq:SSdef} as follows: $\SSbar_t = \SSbar_{t-1}$ and 
\begin{align*}
  \M_t &= \M_{t-1} + q_t C_{t-1}\bx_t(y_t - \mb_{t} - \bx_t^\top\M_{t-1}) &
  C_t  &= C_{t-1} - q_t C_{t-1} \bx_t \bx_t^\top C_{t-1} \nonumber\\
  a_t  &= a_{t-1} + \frac{1}{2} &
  b_t  &= b_{t-1} + \frac{q_t}{2}(y_t - \mb_{t} - \bx_t^\top\M_{t-1})^2, 
\end{align*}
where $q_t=(\bx_t^\top\unit + \bx_t^\top C_{t-1} \bx_t)^{-1}$. In essence, the approximate observation
process decouples the learning about the observational parameters:
when no MU fires then $(\mub,\nub)$ is updated, else $(\mu,\nu)$ is updated.

After assimilating the baseline observations 
$y_1,\dots,y_{\tau-1}$, both $\nub^{-1}$ and $\mub$ are known (and known to
be small) with considerable certainty. Thus, approximating $\nub^{-1}$ as
$0$ and considering $\mub$ to be a point mass at $\mb$ is
reasonable. Furthermore, the prior for $\nu$ does not need to be set
until just before the observation $y_{\tau}$ is assimilated. Given the
tight posterior for $\nub$ at this juncture it is,
therefore, possible to incorporate the knowledge that $\nub>>\nu$ into
the vague prior for $\nu$ (which is conceptually equivalent to specifying an initial
joint prior on $\nub$ and $\nu$).
Letting $\nub_{\tau-1}^{\mbox{\scriptsize med}}$ denote the posterior
median of $\nub$ at time $\tau-1$,
tuning parameters 
$\epsilon<<1$ and $\delta<<1$ are chosen such that 
 $\mathbb{P}(\nu>\epsilon\nub)\approx \delta$ is desired. Given that
\begin{align}
  1-\delta = \mathbb{P}(\nu\leq\epsilon\nub) \approx \mathbb{P}(\nu\leq\epsilon\nub_{\tau-1}^{\mbox{\scriptsize med}}) = \mbox{\textsf{Gam}}(b_{\tau-1}\epsilon\nub_{\tau-1}^{\mbox{\scriptsize med}};~a_{\tau-1}),\label{eq:DefineNuPrior}
\end{align}
where $\mbox{\textsf{Gam}}(x; \alpha)$ is the cumulative distribution
function evaluated at $x$ of a gamma random variable with shape $\alpha$ and unit rate, a practical specification for the prior for $\nu$ is obtained by defining a small $a_{\tau-1}=a_0$ and then solving \eqref{eq:DefineNuPrior} for $b_{\tau-1}$.

\subsection{Improving the marginal likelihood estimate} \label{sec:ML}

The following post-processing development is motivated by the analysis
of a particular simulated dataset where the point estimate for the
number of MUs is one greater than the true number. The detailed
analysis in Section \ref{sec:SimStudyOver} shows that the extra MU has
a very weak expected 
MUTF and that it, effectively, acts simply to increase the variability
in the response.
The problem arises because the 
$u$-vector, $\bmu$, of expected MUTF contributions has a Gaussian
prior which, to allow reasonable uncertainty across the typical range
of believable MUTF contributions, also places a non-negligible
prior mass at low and even negative values. Negative expectations for
an individual MU need not be prohibited by the data provided
that MU is always inferred to  fire alongside another MU with a
 positive expectation of similar or larger magnitude. The fact that the parameter
suport permits this possibility potentially increases the marginal
likelihood for a model which is larger than that necessary to explain the
data.

Guaranteeing
positive MUTFs greater than some minimum \citep{Bro03, Maj07} would
require a change to the likelihood. The approach taken in
\citet{Rid06} is to specify 
independent left-truncated gamma prior distributions for the the expected MUTFs,
 $\mu_j$ for $j=1,\ldots,u$. However any such change would not lead to
the tractable updates required for the concise sequential analysis
 described in 
 Sections \ref{sec:Overview} and \ref{sec:DetailObsProc}. 
Within the constraints of the algorithm overviewed in Section
\ref{sec:Overview}, the natural mechanism for preventing these
undesirable scenarios is via post-processing: the conditional prior for
$\bmu|\nu$ in \eqref{eq:ObsPrior} is re-calibrated by truncating it to the region
$M=[\mumin,\infty)^u$ for some minimum MUTF $\mumin$. It follows that the re-calibrated marginal prior for $\bmu$ is:
\begin{align}
  \tilde{\pi}(\bmu) & = \frac{1}{\pi(M)}\pi(\bmu)\mathbb{I}_M(\bmu),\label{eq:TruncMU}
\end{align}
where $\pi(\bmu)$ is the multivariate Student's t-density centred at $\M_0$ with shape matrix $\frac{b_0}{a_0}C_0$ and $2a_0$ degrees of freedom, and, with a slight abuse of notation, $\pi(M)=\int_M\pi(\bmu)d\bmu$. The effect on the marginal likelihood from the prior re-calibration is examined by Proposition~\ref{prop:AdjML}.

\begin{proposition}\label{prop:AdjML}
Let $f(y_{1:T}|s_{1:T})$ denote the marginal likelihood defined in
Section~\ref{sec:Overview}. The
re-calibrated marginal likelihood, denoted by
$\tilde{f}(y_{1:T}|s_{1:T})$, resulting from truncating 
 the prior for $\bmu$ in \eqref{eq:ObsPrior} to $M$ is:
 \begin{align}
  \tilde{f}(y_{1:T}|s_{1:T}) = \frac{\pi(M|y_{1:T},s_{1:T})}{\pi(M)} f(y_{1:T}|s_{1:T}) = \frac{\pi(M|y_{1:T},s_{1:T})}{\pi(M|y_{1:\tau-1},s_{1:\tau-1})} f(y_{1:T}|s_{1:T}), \label{eq:recalib}
\end{align}
where $\pi(M|y_{1:t},s_{1:t}) = \int_M \pi(\bmu|y_{1:t},s_{1:t}) d\bmu$.
\end{proposition}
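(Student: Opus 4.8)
The plan is to prove the identity by working with the full joint model over the data $y_{1:T}$ and the complete parameter vector $\theta$, which collects $\bmu$, $\nu$, $\mub$, $\nub$ and the excitability parameters, and to track how the truncation \eqref{eq:TruncMU} of the $\bmu$-block propagates to the marginal likelihood
\[
  f(y_{1:T}\mid s_{1:T}) = \int L(y_{1:T}\mid\theta,s_{1:T})\,\pi(\theta)\,d\theta,
\]
where $L$ denotes the full joint likelihood and $f(y_{1:T}\mid s_{1:T})>0$. The first step is to observe that re-calibrating only the prior on $\bmu$ leaves $L$ and the conditional prior of every other parameter given $\bmu$ unchanged; combined with \eqref{eq:TruncMU} this shows that the re-calibrated joint prior is exactly $\tilde\pi(\theta)=\pi(\theta)\,\mathbb{I}_M(\bmu)/\pi(M)$, and hence $\tilde f(y_{1:T}\mid s_{1:T}) = \pi(M)^{-1}\int_{\{\bmu\in M\}} L(y_{1:T}\mid\theta,s_{1:T})\,\pi(\theta)\,d\theta$.

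The second step is a single application of Bayes' theorem inside the integral: $L(y_{1:T}\mid\theta,s_{1:T})\,\pi(\theta) = f(y_{1:T}\mid s_{1:T})\,\pi(\theta\mid y_{1:T},s_{1:T})$. Substituting and integrating over the region $\{\bmu\in M\}$, with the remaining coordinates of $\theta$ unrestricted, collapses the right-hand side to $\pi(M)^{-1}f(y_{1:T}\mid s_{1:T})\,\int_M \pi(\bmu\mid y_{1:T},s_{1:T})\,d\bmu$, which is the first equality in \eqref{eq:recalib}.

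For the second equality I would show $\pi(M)=\pi(M\mid y_{1:\tau-1},s_{1:\tau-1})$, i.e.\ that the marginal prior of $\bmu$ equals its posterior after only the baseline observations. This follows from the structure recorded in Section~\ref{sec:DetailObsProc}: for $t\le\tau-1$ we have $\bx_t=\zero$, so by \eqref{eq:ObsProc} the law of $y_t$ does not involve $\bmu$ (or $\nu$), the baseline factor of the likelihood depends on $(\mub,\nub)$ only, and $(\mub,\nub)$ is a priori independent of $(\bmu,\nu)$; hence $\bmu$ is independent of $y_{1:\tau-1}$ and its posterior equals its prior---equivalently $\SStat_{\tau-1}=\SStat_0$. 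Integrating this identity over $M$ gives the claim, and the statement in \eqref{eq:recalib} follows.

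I expect the only delicate point to be the very first step: one must confirm that ``truncating the prior of $\bmu\mid\nu$ to $M$'' does not covertly change the conditional law $\pi(\nu\mid\bmu)$ or any other ingredient of the model, so that the recalibrated joint prior genuinely has the product form $\pi(\theta)\,\mathbb{I}_M(\bmu)/\pi(M)$ with normaliser $\pi(M)=\int_M\pi(\bmu)\,d\bmu$. This is precisely what \eqref{eq:TruncMU} encodes for the marginal of $\bmu$, and since the Gaussian--gamma block is in effect truncated jointly to $M\times(0,\infty)$ the induced conditional of $\nu$ given $\bmu$ is unaltered; once that bookkeeping is settled the argument is purely measure-theoretic and uses nothing specific to the Student's-$t$ or Gaussian--gamma form.
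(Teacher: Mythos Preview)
Your argument is correct and follows essentially the same route as the paper's proof: both establish the first equality via the Bayes identity $\pi(\bmu)\,f(y_{1:T}\mid\bmu,s_{1:T})=f(y_{1:T}\mid s_{1:T})\,\pi(\bmu\mid y_{1:T},s_{1:T})$ integrated over $M$, and both obtain the second equality from the fact that the baseline observations carry no information about $\bmu$. The only cosmetic difference is that the paper works directly with the $\bmu$-marginal prior and likelihood, whereas you carry the full parameter $\theta$ through and marginalise at the end; your extra care in checking that truncation of the $\bmu$-block leaves $\pi(\nu\mid\bmu)$ and the remaining prior factors untouched is a detail the paper leaves implicit.
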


\begin{proof}
Expressing the re-calibrated marginal likelihood as a marginalisation of $\bmu$ and substituting the definition \eqref{eq:TruncMU} produces the first equality by:
\begin{align*}
  \tilde{f}(y_{1:T}|s_{1:T}) =\int \tilde{\pi}(\bmu) f(y_{1:T}|\bmu,s_{1:T}) d\bmu
	 = \frac{\int_M \pi(\bmu)f(y_{1:T}|\bmu,s_{1:T}) d\bmu}{\pi(M)}
	 = \frac{\pi(M|y_{1:T},s_{1:T})}{\pi(M)} f(y_{1:T}|s_{1:T}).
\end{align*}
The second equality in \eqref{eq:recalib} arises as the first $\tau-1$ observations relate exclusively to the baseline.

Evaluation of $\pi(M|y_{1:\tau-1},s_{1:\tau-1})$ is straightforward
since it is an orthant probability for the multivariate Student-t
distribution. In contrast, the posterior probability is estimated from
the $N$-component-mixture approximation of the posterior distribution by the final particle set:
\begin{align*}
  \hat{\pi}(M|y_{1:T},s_{1:T}) & = \frac{1}{N} \sum_{i=1}^{N} \pi(M|\bx_{1:T}^{(i)}, y_{1:T}). 
\end{align*}
\end{proof}

There is no theoretical argument against assuming \eqref{eq:TruncMU}
from the outset. Indeed, the firing events sampled in the propagation
step of Algorithm~\ref{tab:Alg} would then account for the restriction to
the $\bmu$ parameter space and therefore directing particle samples to
a more appropriate approximation for posterior parameter
estimates. However, implementing this scheme requires at most $N2^u$
orthant evaluations of the multivariate Student's t-distribution per
time step in calculating the re-sampling weights. Standard procedures
for evaluating these orthant probabilities \citep{Genz09} are 
expensive, so the computational time of the resulting
SMC-MUNE algorithm would increase substantially.

\section{Simulation study}\label{sec:SimStudy}

The performance of the SMC-MUNE algorithm is now assessed using
$200$ simulated data sets, $20$ for each true number of MUs of
$u^*=1,\ldots,10$. Each data set consists of $T=220$ measurements with
$\tau=21$ so that the first $20$ observations correspond to the
baseline, $s_t=0$\,V, and these are followed by the supramaximal stimulus $s_{21}=40$\,V. 

All MUs are excited according to the log-logistic curve
\eqref{eq:ECdef} with MU parameters simulated anew for each dataset as
follows: $\eta_j\sim \mathsf{Unif}(5,40)$,
$\lambda_j\sim\mathsf{Gamma}(2,8)\mathbb{I}(\lambda_j<10)$, $\mu_j\sim \mathsf{N}(40,20^2)\mathbb{I}(\mu_j>20)$, $\nu^{-1}\sim \mathsf{Unif}(1,5)$. 
The measurement units for excitation parameters are all in V and the expected MUTFs 
are in mN with variance parameter in mN${}^2$. 
Parameters were independent except
for the following constraints, where $(j)$ is the index of the MU with the $j$th
highest $\eta$ value: $\eta_{(j)}-\eta_{(j-1)}>2$ (neigbours must
be separate) and $|\mu_{(j)}-\mu_{(j-1)}|>4$ (neighbours
should have distinct expectations). To test the ability of SMC-MUNE, these values ensure a greater
range in MUTF contributions and in the excitation parameters than is
typically observed in practice. For example,
 the average 
coefficient of variation is $11.2\%$ (as opposed to $1.65\%$; see Section \ref{sec:PriorDist}). This resulted in $79\%$
of datasets containing at least one alternation event. 
 Additional noise was generated
as in \eqref{eq:ObsProc} with $\mub=0$\,mN and $\nub^{-1}=0.25^2$\,mN${}^2$.

To each data set, a set of neuromuscular models was fitted with a
number of MUs, $u$, ranging from $1$ up to a maximum size of
$\Umax=12$. The sufficient statistics for the parameter prior
distributions are provided in Appendix~\ref{sec:APX}.
To control 
the Monte Carlo variability and the error in the numerical
integration,
the particle set
size, $N$, and the lattice size for numerical integration were
iteratively increased until estimates of the marginal likelihood were
stable; see Appendix
\ref{sec:APX} for further details. The point estimate of the number of MUs is taken to be the maximum \textit{a posteriori} (MAP) model, $\hat{u}$, and uncertainty in the estimate is quantified by the 95\% highest posterior credible set (HPCS); the minimal set of models where their total posterior probability is at least 95\%. In addition, the estimated posterior probability for the true model, $p_{u^*} = \hat{\mathbb{P}}(U=u^*|y_{1:T},s_{1:T})$, is evaluated. 

\begin{table}
\centering
\caption{Summary of the MU-number posterior mass functions and required numerical resource for 200 simulated data sets.}
\label{tab:SimSummary}
\begin{tabular}{lcccccc}
\hline
Number of MUs, $u^*$									& $\leq 5$ 	& 6 	& 7 		& 8 		& 9 		& 10		\\
\hline
No. where $\hat{u}=u^*$								& 100			& 19		& 19 		& 16  	& 15  	& 12		\\
No. where $u^*$ in HPCS								& 100			& 20		& 20		& 19		& 19		& 20		\\
Avg. size of HPCS											& 1.11		& 1.70	& 2.10	& 2.05	& 2.35 	& 2.45 	\\
Avg. $\hat{p}_{u^*}$ (\%)							& 97.95		& 89.20	& 80.45	& 69.42	& 62.70	& 54.68	\\
Avg. particle set size for $u^*$			& 5000 		& 5250 	& 6000 	& 7500 	& 7500 	& 8250 	\\
Avg. particle set size for $\hat{u}$	& 5000 		& 5000 	& 6000 	& 7500 	& 9250 	& 7750 	\\
Avg. $n{\times}n$ lattice size for $u^*$ 					& 30.0 		& 30.5 	& 30.5 	& 32.0 	& 32.5 	& 32.0 	\\
Avg. $n{\times}n$ lattice size for $\hat{u}$ 			& 30.0		& 30.0	& 30.5 	& 32.0	& 35.0	& 32.0 	\\
\hline
\end{tabular}
\end{table}


Table~\ref{tab:SimSummary} presents summaries of the
mass functions of the number of MUs and descriptions of the resource
required as functions of the true number of MUs.
The MAP estimate
corresponded to the true number of MUs for all data sets generated
from five or fewer MUs, and for most of these datasets the HPCS contained
only the true model.
For true sizes of greater than five the MAP estimate was correct for
$81$ of the $100$ data sets and the HPCS contained the truth for all
but two data sets.

It is
unsurprising that the uncertainty in the MU-number increases with the
true number of MUs; this is visible both as an increase in the average
size of the HPCS and a reduction in the average posterior probability
for the true number.
In addition, both the number of particles required to control Monte Carlo
variability and the size of the numerical lattice required for accurate
numerical integration also increase as with the true number of MUs.
 This demonstrates the challenge of MUNE for large
neuromuscular systems that possess complex features resulting from
alternation.

Of the $19$ datasets where the MAP estimate $\hat{u}$ did not
correspond to the truth, $u^*$, one dataset had $\hat{u}>u^*$ with the
rest (including the two outliers) having $\hat{u}<u^*$. 
 The stimulus-response curves for the first case and a typical example
 of the second are presented in Figure~\ref{fig:SimCaseData} and are
 discussed in turn below.

\begin{figure}
  \centering
  \includegraphics[width=0.8\textwidth]{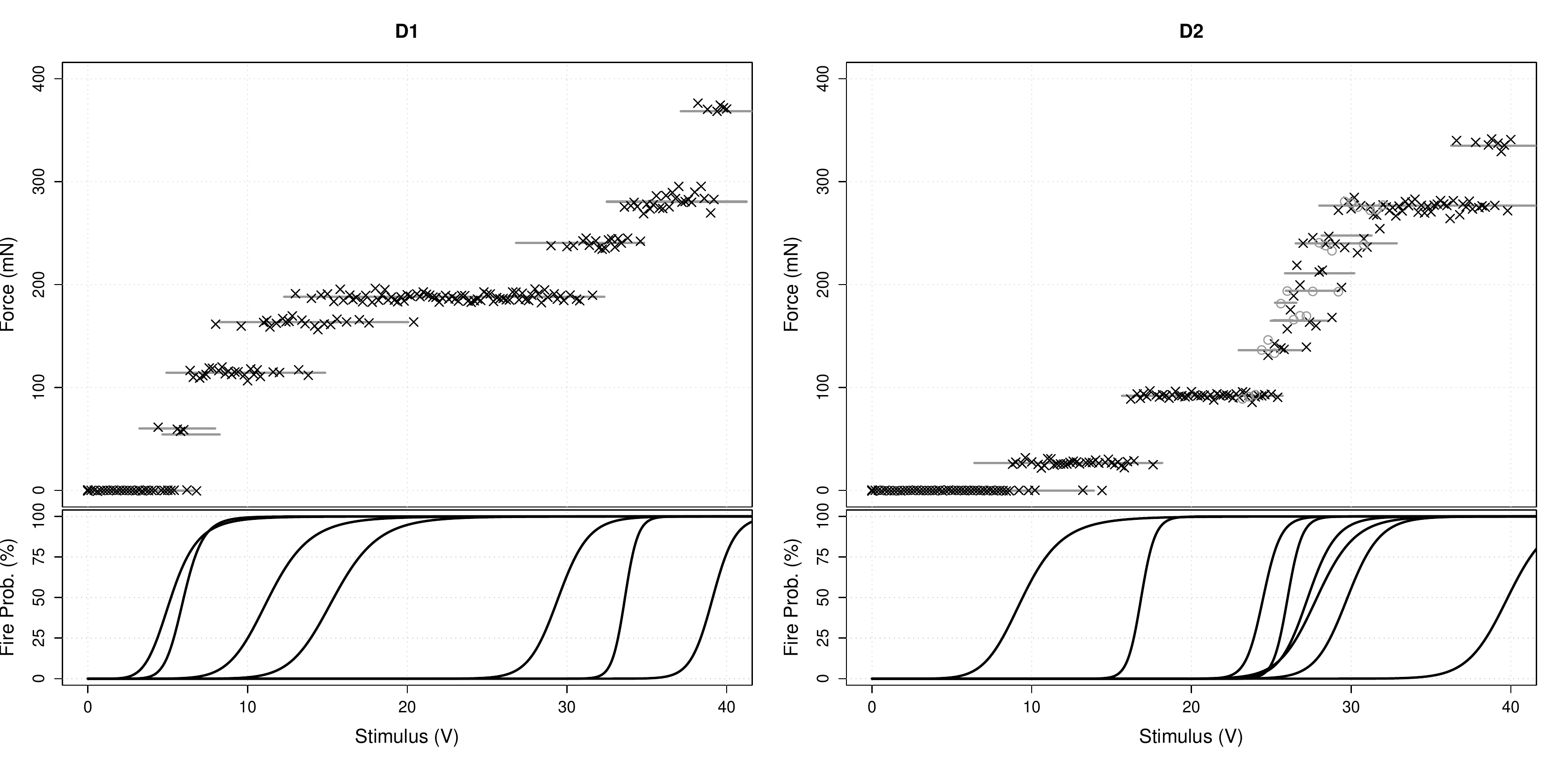}
  \caption{Stimulus-response curve (top) for the simulated data with lines representing the expected WMTF over the stimuli intervals where the joint firing probability is greater than 5\%  according to the individual excitability curves (bottom). Left: Data set D1 contains $u^*=7$ MUs but $\hat{u}=8$. Right: Data set D2 contains $u^*=8$ MUs but $\hat{u}=7$. Circle points, additional 23 simulations over the 23--32\,V alternation period involving 5 MUs.}
  \label{fig:SimCaseData}
\end{figure}

\subsection{Over-estimation} \label{sec:SimStudyOver} 

The first data set, D1, contains $u^*=7$ MUs in truth but the SMC-MUNE
method produces a MAP estimate of $\hat{u}=8$. The posterior
probability of the true model is $\hat{p}_{u^*}=14.9\%$ and this
model, along with the larger 9 MU model,
is contained with a 95\% HPCS. Parameter estimates for the MAP model
(Table~\ref{tab:D1muEst}) show that the penultimate MU has a median
expected twitch force of 9.6\,mN with a credible upper bound of
15.7\,mN, much lower than the 20\,mN simulation threshold. Figure~\ref{fig:PredDen_S37}
presents the the construction of the predictive WMTF density for the
true and MAP models at a 37\,V stimulus. The local modes in the model
containing the true MU-number correspond uniquely to particular firing
combinations. In contrast, the weak MU in the MAP model principally serves to increase the variability around a specific WMTF response level rather than describing a distinct MU.

In light of these concerns, the marginal likelihood estimates are
adjusted according to Section~\ref{sec:ML} with a conservative lower bound of
$\mumin=15$\,mN to guard against small MUs that, when firing, are
indistinguishable from other combinations. The corrected posterior
mass function places 89.3\% of the mass on the correct, seven-MU model, with
10.7\% mass on the eight-MU model. The estimates of expected MUTF in
Table~\ref{tab:D1muEst} for the seven-MU model are similar to those
prior to the adjustment and are still close to the true values from
which the data was generated. However, the prior adjustment for the eight-MU hypothesis has a significant effect on the penultimate MU and, so
as to preserve the overall maximum WMTF, a small reduction in the estimated $\mu$s for its neighboring MUs.

\begin{table}%
  \centering
  \caption{Expected MUTF median and 95\% credible interval estimates for MUs with high excitation threshold from the true ($u^*=7$) and MAP ($\hat{u}=8$) models, with and without post-process truncation ($\mumin=15$\,mN) on data set D1.}
  \label{tab:D1muEst}
  \begin{tabular}{lcccc}
    \hline
    Parameter				& $\mu_6$ 			& $\mu_7$ 			& $\mu_8$ 			& $\nu^{-1}$				\\
    True					& 40.2\,mN          & 87.9\,mN          & --                & 4.54
    \,mN${}^{2}$	\\
    \hline
    $u=7$					& 40.5 (37.7, 43.5) & 91.2 (86.6, 95.9) & --                & 3.85 (3.13, 4.81)	\\
    $u=8$					& 36.3 (32.4, 40.2) & 9.6 (4.7, 15.7)   & 86.7 (80.3, 92.7) & 3.22 (2.57, 4.18)	\\
    $u=7$ \& $\mumin=15$	& 40.5 (37.8, 40.2) & 91.3 (86.8, 95.7) & --                & 3.90 (3.14, 4.92)	\\
    $u=8$ \& $\mumin=15$	& 35.7 (31.1, 40.0) & 15.7 (15.0, 20.7) & 80.4 (71.4, 86.6) & 3.23 (2.60, 4.09)	\\
    \hline
  \end{tabular}
\end{table}

\begin{figure}
  \centering
  \includegraphics[width=0.8\textwidth]{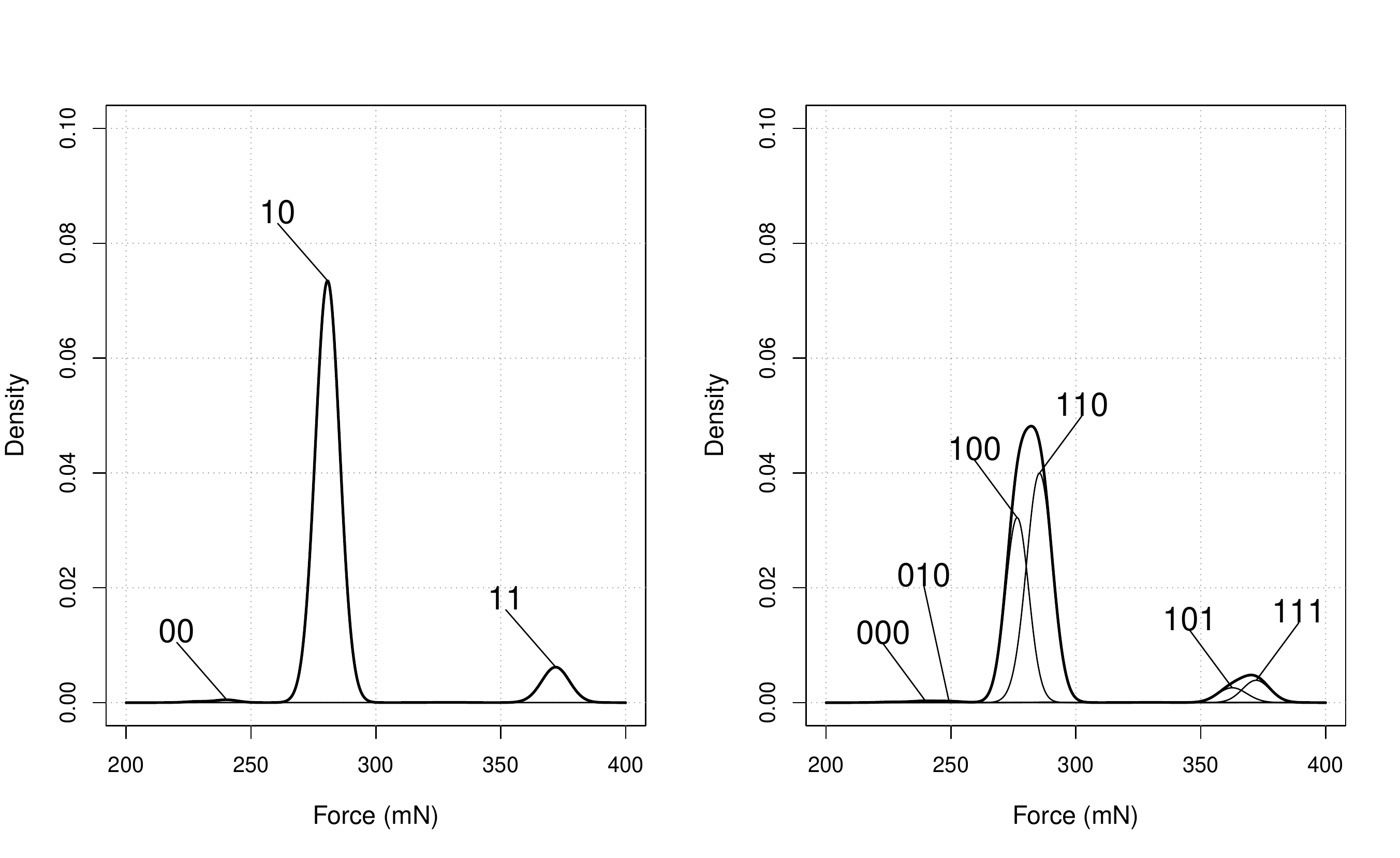}
  \caption{Predictive density (thick line) at stimulus 37\,V from the
    seven (left) and eight (right) MU model without post-process
    adjustment. Thin lines identify the contribution to the predictive
    for the indicated firing combinations associated to the final few
    MUs. In both cases, the first five MUs fire with near
    certainty. Most firing combinations with negligible 
    predictive probabilities are omitted from the plot.}
  \label{fig:PredDen_S37}
\end{figure}

\subsection{Under-estimation} \label{sec:SimStudy_under} 

The second data set, D2, contains $u^*=8$ MUs and presents a period of
alternation between 23--32\,V which involves five MUs. The SMC-MUNE
procedure, however, estimates $\hat{u}=7$ and gives this a high posterior
probability of 97.1\% after applying the post-process adjustment at
$\mumin=15$\,mN. The main source for this under-estimation arises
through the over-estimation of the excitability scale parameter
(Table~\ref{tab:lamMAX}) for the fourth MU ($\lambda_4$), so that the stimulus
interval for probabilistic firing behavior is nearly three times wider
than it should be. Consequently, this incorrectly estimated MU acts
as a surrogate for MU-number $6$, which has similar twitch force properties.

One potential solution is to reduce the upper bound for the scale
parameter $\lambdamax$ in \eqref{eq:ECprior} to constrain estimation
against shallow excitability curves. Table~\ref{tab:lamMAX} presents
scale parameter estimates for selected MUs at the original
($\lambdamax=14$V) and reduced ($\lambdamax=7$V) upper bounds. Under the
reduced bound the 8 MU model becomes a member of the HPCS, but the MAP
estimate remains at $\hat{u}=7$ with a high posterior probability of
94.6\%. Although a further reduction to $\lambdamax$ might be
appealing, this action is likely to be detrimental in determining good
model fits. For example, the scale parameter of the first MU, which
has a true value of 5.0\,V is accurately estimated whether
$\lambdamax$ is $7$ or $14$,
principally because its excitability curve is well separated
from the other curves, but a further reduction in $\lambda_{\max}$
risks the introduction of an additional, spurious MU to explain the
low-stimulus observations. 

\begin{table}[ht]
\centering
\caption{Motor unit posterior probabilities and excitation parameter estimates for selected MUs from D2 with scale upper bound at $\lambdamax=14$\,V and $\lambdamax=7$\,V, and with 23 additional measurements at $\lambdamax=14$\,V.}
\label{tab:lamMAX}
\begin{tabular}{lcP{18mm}P{18mm}P{18mm}P{18mm}P{18mm}P{18mm}}
  \hline 
 					& True 	& \multicolumn{2}{c}{$\lambdamax=14$}	& \multicolumn{2}{c}{$\lambdamax=7$}	& \multicolumn{2}{c}{Extra}				\\ 
  $u$				& 8		& 7					& 8					& 7					& 8					& 7 				& 8					\\
  \hline
  $\mathbb{P}(u|y)$	& --	& 96.7\%			& 3.3\%			& 94.6\%			& 5.4\%			& 28.7\%			& 71.3\%			\\
  $\eta_4$ 			& 26.0\,V 	& 26.9 \newline (25.4, 28.8) & 26.6 \newline (25.3, 28.6) & 27.0 \newline (25.4, 28.9) & 26.5 \newline (25.2, 28.5) & 26.7 \newline (25.7, 27.7) & 26.4 \newline (25.6, 27.6) \\ 
  $\eta_5$			& 27.3\,V 	& 27.8 \newline (26.0, 29.1) & 27.4 \newline (25.5, 28.9) & 27.8 \newline (25.9, 29.0) & 27.5 \newline (25.6, 28.9) & 27.4 \newline (26.2, 28.5) & 27.2 \newline (25.9, 28.3) \\ 
  $\eta_6$ 			& 27.9\,V 	& -- 				& 27.9 \newline (26.5, 29.5) & -- 				& 27.9 \newline (26.6, 29.5) & --			 	& 27.5 \newline (26.5, 28.8) \\ 
  $\lambda_4$ 		& 1.8\,V 	& 4.5 \newline (1.8, 7.6) 	& 3.6 \newline (1.0, 7.6) 	& 4.3 \newline (1.9, 6.6) 	& 3.1 \newline (0.7, 6.3) 	& 4.0 \newline (1.8, 7.8) 	& 2.5 \newline (0.9, 6.3) 	\\ 
  $\lambda_5$ 		& 3.6\,V 	& 4.1 \newline (2.2, 7.3) 	& 4.7 \newline (1.8, 7.9) 	& 4.0 \newline (2.1, 6.5) 	& 4.4 \newline (1.7, 6.6) 	& 3.7 \newline (2.1, 6.3) 	& 4.6 \newline (1.8, 8.3) 	\\ 
  $\lambda_6$		& 4.8\,V 	& -- 				& 4.7 \newline (1.6, 8.1) 	& -- 				& 4.4 \newline (1.4, 6.6) 	& -- 				& 4.4 \newline (2.3, 7.5) 	\\ 
   \hline
\end{tabular}
\end{table}

In the original analysis, $\lambda_4$ 
is mis-estimated because of the limited information available in the
observations to adequately describe the period of alternation between
23--32\,V which involves five MUs. To show that this is the case, an
additional 23 observations were generated evenly over this interval;
see Figure~\ref{fig:SimCaseData}. This modest addendum to the data set
is sufficient for the true model to be identified, $\hat{u}=8$, and
with a posterior probability of 71.3\%, and with better scale
parameter estimates. However, the increase in computational resource required to
obtain the same degree of Monte Carlo and numerical accuracy was substantial: from 5000 to 25000 particles and from a $30{\times}30$ to $50{\times}50$ lattice for the eight-MU hypothesis.

\section{Case study: rat tibial muscle}\label{sec:CaseStudy}

The case study arises from \citep{Cas10} where a rat tibial muscle
(medial gastrocnemius) receives stem cell therapy to encourage
neuromuscular activation after simulating paralysis. The two data
sets, presented in Figure~\ref{fig:RatData}, are generated by
applying stimuli for different durations. The first data set, using
10{\micro}sec duration stimuli, consist of $T=304$ observations,
including 11 baseline measurements and a maximal stimulus of
100\,V. In contrast, the second data set was collected using
50{\micro}sec duration stimuli and consists of $T=669$ observations,
including 7 baseline measurements, and with a maximal stimulus of
60\,V. The data sets are named R10 and R50 respectively. Since both data sets are collected from the same neuromuscular system it is expected that MUNE should be similar between the data sets.

Na\"ive assessment of the stimulus-response curves by counting the
number of distinct levels of twitch force suggests that there are
perhaps nine or ten MUs, but this would ignore any potential features
arising from alternation. The histogram inserts in
Figure~\ref{fig:RatData} present frequency in absolute difference
between consecutive twitch forces when ordered by stimulus
intensity. The highest frequency occurs at low differences and represents the within-MUTF variability whereas the less-frequent, larger differences appear due to the firing of different combinations of MUs. In both cases a minimum expected MUTF of $\mumin=15$\,mN is suitable to correct against the estimation of MUs with negligible contribution to the observed twitch forces.

The SMC-MUNE procedure was applied up to a maximum model size of
$\Umax=12$ with prior sufficient statistics and algorithmic parameters
as specified in Appendix~\ref{sec:APX}. For both data sets, the estimated motor unit
number posterior mass function (Table~\ref{tab:RealResults})
identifies the MAP estimate as $\hat{u}=8$,  with this being the only
member of the HPCSs. There is a noticeable difference in the
computational resources required as the MAP model for R50 required
twenty times more particles and three times finer lattice than that
for data set R10. This is in part reflective of the relative sizes of
the data sets, but may also relate to the relative complexities of 
the state-spaces for the firing vectors.

Figure~\ref{fig:RealParam} presents the estimated excitability curves
for each of the MUs, with MUs labelled in order of increasing $\mathbb{E}[\eta|y_{1:T}]$.
 First, the location parameters under the 50{\micro}sec duration stimuli are approximately four times lower than the corresponding parameters under 10{\micro}sec duration stimuli. This difference in scale corresponds to Weiss's law \citep{Bos83} that relates the excitation of the neuron to the charge built-up in the cell. Despite this, it is clear that the majority of the MUs are excited within a short stimulus window with only the last MU requiring a larger stimulus to be excited. This high degree of activity at low stimulus is reflective of the sudden early rise in the stimulus-response curves.

To compare MUs between data sets, the coefficient of variation for the
random variable associated with each excitability curve is presented
in Figure~\ref{fig:RealParam}c. Apart from the first MU, the 95\%
credible intervals from each data set for a given MU overlap,
suggesting similar coefficients of variation for the MUs; 
this might be  anticipated since
measurements are taken from the same neuromuscular system. These
similar estimated coefficients are larger than the estimate in   
 \citet{Hal04}, which is presented for comparison. This reflects the experimentation where the developed neurons are less stable and are yet to restore full and healthy motor function.

Table~\ref{tab:Xest} present the most probable firing combinations for each visibly distinct response level in Figure~\ref{fig:RatData}. The estimated firing behavior of each MU, after label-swapping similarly excitable MUs for R50, are very similar between the two data sets. It can then be suggested that the level at approximately 120\,mN in both data sets and at about 70\,mN in R50 are potential consequences of alternation as MUs that fired in contributing to weaker WMTFs are latent in forming these response levels.
However, a difference in estimated firing behavior occurs at the
120\,mN response level whereby the SMC-MUNE procedure obtained two
different model fits; MU1+MU4 in R10 and MU2+MU3 in R50. As a
consequence, the estimated excitation range for MU1 in R50 is
unusually large, leading to a relatively flat excitability curve with
an enlarged coefficient of variation (Figure \ref{fig:RealParam}) in
relation to other MUs and between data sets. Nevertheless, the net
effect of these firing combinations with the estimated expected MUTFs,
see Figure~\ref{fig:RealParam} inserts, does not suggest that the
overall description of the two data sets greatly differ. This
exemplifies the difficulty in disseminating between MUs with similar
excitation and twitch characteristics. The difference in fit could
have occurred in part due to the 70\,mN response level in the R50 data
set not being represented within dataset R10.


\begin{table}
  \begin{center}
  \begin{tabular}{lccccccccc}
    \hline
    Data set												& \multicolumn{4}{c}{R10} 			&$~$& \multicolumn{4}{c}{R50}	\\
    No. of MUs ($u$)								& 7			& 8			& 9			& 10		&		& 7			& 8				& 9			& 10		\\
    \hline
    $\mathbb{P}(u|\mathbf{y})$ (\%)	& 0.04	& 99.95	& 0.01	& 0.00	&		& 0.00	& 100.00	& 0.00	& 0.00	\\
    Grid Size ($n{\times}n$)				& 30		& 30		& 30		& 30		&		& 100		& 90			& 50		& 90		\\
    No. of Particles (,000s)				& 20		& 5			& 5			&	5			&		& 155		& 100			& 65		& 115		\\
    \hline
  \end{tabular}
    \end{center}
  \caption{Posterior summary from the SMC-MUNE procedure for the rat tibial muscle using 10{\micro}sec and 50{\micro}sec duration stimuli.}
  \label{tab:RealResults}
\end{table}

\begin{figure}%
 \includegraphics[width=0.33\textwidth]{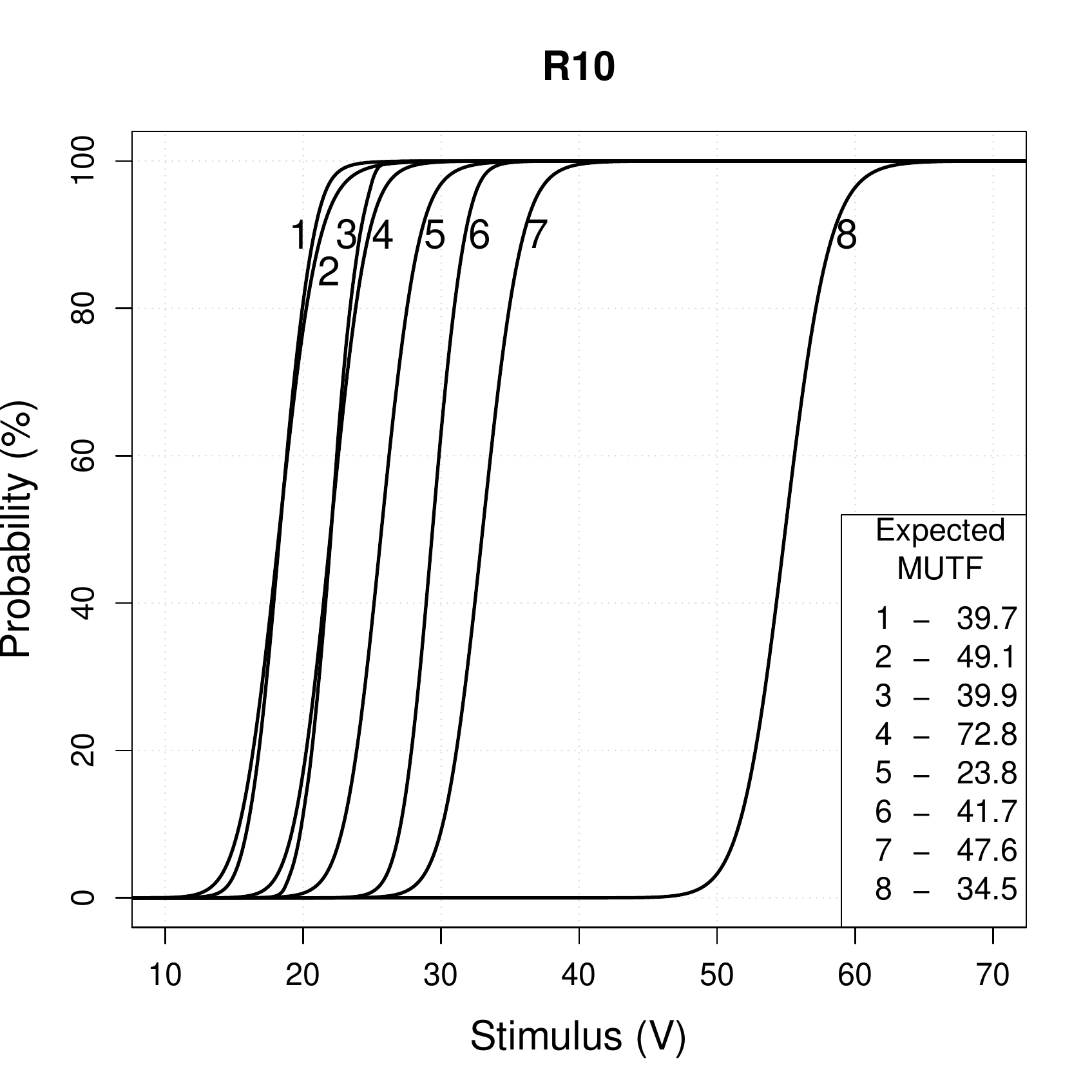}%
 \includegraphics[width=0.33\textwidth]{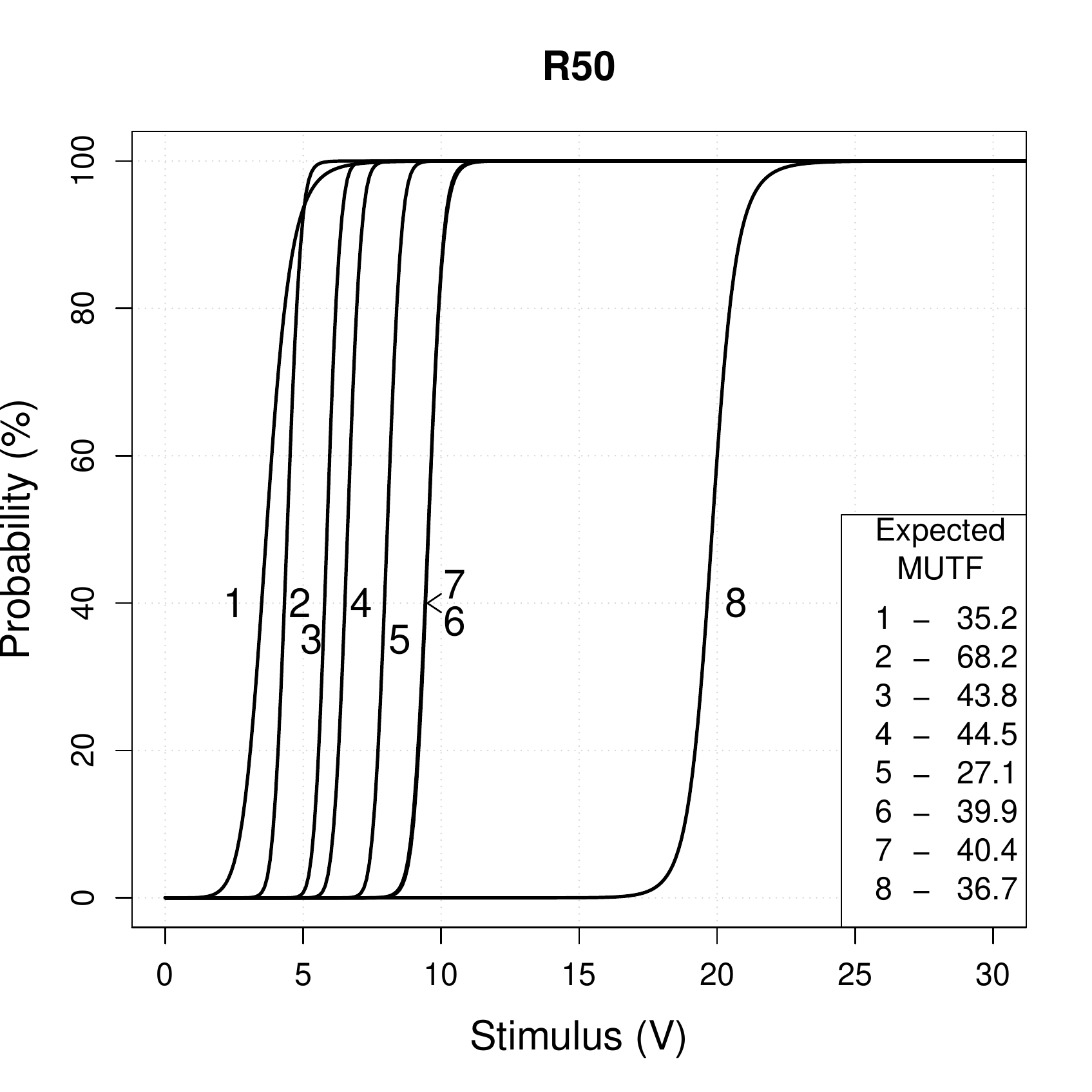}%
 \includegraphics[width=0.33\textwidth]{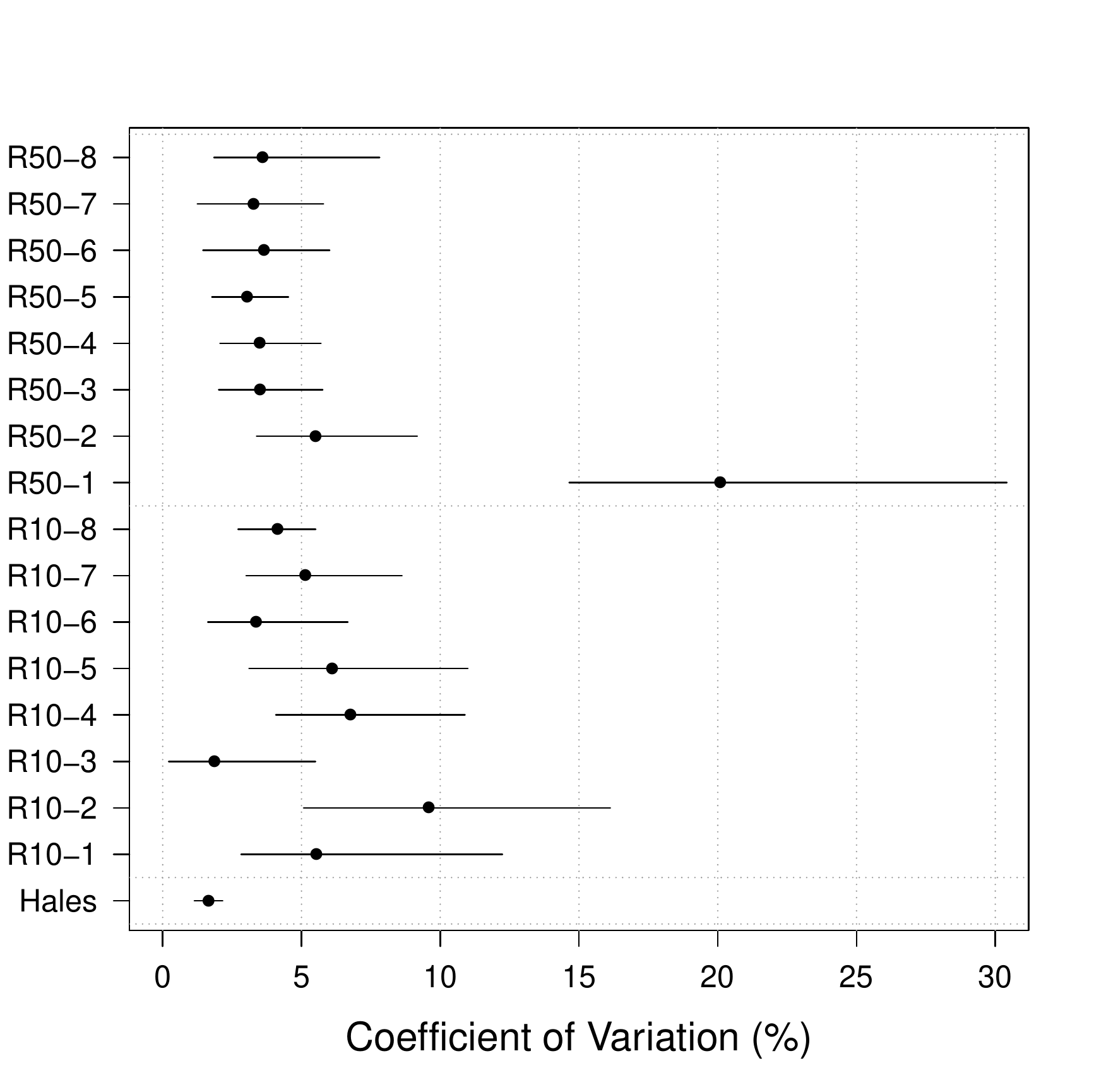}%
\caption{Estimated excitability curves from the eight MU hypotheses
  for data sets R10 (left) and R50 (centre) with corresponding
  expected MUTF mean estimates. Right:
  median and 95\% credible interval for the coefficient of variation for
  the random variable associated with the excitability curve for each
  MU, together with the mean and
  95\% confidence interval from \citeauthor{Hal04}.}%
\label{fig:RealParam}%
\end{figure}

\begin{table}%
\label{tab:Xest}
\caption{Most probable firing events (1=fire, 0=latent) for each level
  in the stimulus-response curve. The labeled MUs for R50 are
  re-ordered to demonstrate similarity between the two data sets. The
  response level around 70\,mN is not present in the R10 data set.}
\begin{center}
\begin{tabular}{lccccccccrlcccccccc}
\hline
					& \multicolumn{8}{c}{R10} 			&$~$& 					 &\multicolumn{8}{c}{R50}	       \\
Level (mN)& 1 & 2 & 3 & 4 & 5 & 6 & 7 & 8 &		& Level (mN) & 1 & 2 & 4 & 3 & 5 & 7 & 6 & 8 \\
\hline
0					& 0 & 0 & 0 & 0 & 0 & 0 & 0 & 0 &		& 0    			 & 0 & 0 & 0 & 0 & 0 & 0 & 0 & 0 \\
50				& 1 & 0 & 0 & 0 & 0 & 0 & 0 & 0 &		& 40 	 		   & 1 & 0 & 0 & 0 & 0 & 0 & 0 & 0 \\
--				&-- &-- &-- &-- &-- &-- &-- &-- & 	& 70 	 		   & 0 & 1 & 0 & 0 & 0 & 0 & 0 & 0 \\
100				& 1 & 1 & 0 & 0 & 0 & 0 & 0 & 0 &		& 110	 		   & 1 & 1 & 0 & 0 & 0 & 0 & 0 & 0 \\
120				& 1 & 0 & 0 & 1 & 0 & 0 & 0 & 0 &		& 120	 		   & 0 & 1 & 0 & 1 & 0 & 0 & 0 & 0 \\
170				& 1 & 1 & 0 & 1 & 0 & 0 & 0 & 0 &		& 150	 		   & 1 & 1 & 0 & 1 & 0 & 0 & 0 & 0 \\
210				& 1 & 1 & 1 & 1 & 0 & 0 & 0 & 0 &		& 200	 		   & 1 & 1 & 1 & 1 & 0 & 0 & 0 & 0 \\
230				& 1 & 1 & 1 & 1 & 1 & 0 & 0 & 0 &		& 230	 		   & 1 & 1 & 1 & 1 & 1 & 0 & 0 & 0 \\
270				& 1 & 1 & 1 & 1 & 1 & 1 & 0 & 0 &		& 270	 		   & 1 & 1 & 1 & 1 & 1 & 1 & 0 & 0 \\
320				& 1 & 1 & 1 & 1 & 1 & 1 & 1 & 0 &		& 300	 		 	 & 1 & 1 & 1 & 1 & 1 & 1 & 1 & 0 \\
360	 			& 1 & 1 & 1 & 1 & 1 & 1 & 1 & 1 &		& 350			   & 1 & 1 & 1 & 1 & 1 & 1 & 1 & 1 \\
\hline
\end{tabular}
\end{center}
\end{table}


\section{Discussion}\label{sec:Discussion}


This paper presents a new sequential Bayesian procedure for motor unit number
estimation (MUNE), the assessment of the number of the operating motor
units (MUs) from an electromyography investigation into
muscle function. 
The fully adpated sequential Monte Carlo (SMC) filter uses the approximate conditional conjugacy of the twitch process.
The
principal purpose of SMC-MUNE is to estimate the marginal likelihood
for the neuromuscular model based on a fixed number of MUs. From this,
motor unit number estimation (MUNE) is then performed by comparing the
evidence between competing MU-number hypotheses. As is demonstrated in Sections
\ref{sec:SimStudyOver} and \ref{sec:SimStudy_under} SMC-MUNE also allows detailed scrutiny of the quality of each model fit.

SMC-MUNE performed well on simulated data, but two scenarios that may
cause incorect estimation were identified. In the first scenario, one
or more MUs were estimated to have a negligible or negative twitch
force, allowing a model that was larger than the truth to fit the data
and resulting in overestimation of the number of motor units
(MUs). This lead to the development of a post-process correction that
restricts the parameter space. By contrast, the second scenario
resulted in under-estimation because of difficulty in estimating the
underlying process during a period of alternation involving many MUs,
where the same stimulus, applied repeatedly can lead to several
different combinations of MUs firing. This issue persisted despite
constraining a key parameter, but was resolved when, instead, additional data points
were sampled from the region of alternation, strongly suggesting that the
original underestimation arose because the information available in
the data was not sufficient to fully characterise the firing process.

Independent application of SMC-MUNE to two data sets
(with data collected as in \citet{Cas10}) on the same
neuromuscular system resulted in the same estimate for the number of MUs. However, closer examination of the model fits identifies minor variations in parameter estimates and firing patterns that reflected subtle and known differences between the two data sets.

The examples investigated in this paper involve neuromuscular systems
with relatively small numbers of MUs. In practice, large and healthy
muscle groups can contain hundreds of MUs \citep{Goo14}. Application of
SMC-MUNE to these larger problems is currently impractical as the
computation cost increases exponentially with the assumed number of
MUs. As such the SMC-MUNE currently is best applied to small neuromuscular systems such as in some animal or in patients with amyotrophic lateral
sclerosis who have limited motor function. 
The computational demand arises from the necessity to evaluate
the predictive mass function for sampling the firing vectors and to
marginalise this event space for calculating the resamping
weights. One approach to address this is to
approximate very low or very high excitation
probabilities by their respective certainties as in \citet{Dro14}. Alternatively, the
excitability curve for SMC-MUNE is specified in generic terms and so
computational saving are possible by defining a function that has
finite support. In addition to concerns over the size of the computation,
additional resources would be required for a sufficiently fine lattice
over the excitability parameter space to minimize numerical error on
the marginal likelihood estimate. Although adaptive sparse grids
\citep{Bun03} have the potential to be more beneficial in terms of
resource management and precision, care would be needed in automating
the grid refinements, and it is likely that a unique grid would be
associated with each distinct firing pattern.

The sequential aspect of the proposed methodology provides the opportunity for real-time inference that has the potential to provide in-lab assistance during experimentation. In this framework, an interim SMC-MUNE analysis could help in identifying the choice of stimulus to apply in order to collect the best evidence to distinguish between competing hypotheses, as in Section~\ref{sec:SimStudy_under}. The limitations of the present SMC-MUNE procedure to become a wholly online algorithm are the computational aspects discussed earlier and the post-processing stage to correct for potentially negligible estimates of the expected MU twitch forces. Solutions to these outstanding problems would increase the efficiency and accuracy of SMC-MUNE and, hence, the range of application.


\section*{Acknowledgements}

The authors would like to thank Dr. Christine Thomas from the Miami Project to Cure Paralysis who provided the data analysed in this paper and for specialist discussions.


\bibliographystyle{plainnat}
\bibliography{MUNErefs}

\appendix 

\section{Additional detail}
\label{sec:APX}
The prior sufficient statistics for the simulation and case studies are: $\mb_0 = 0$, $\cb_0 = 10^3$, $\ab_0 = 0.5$, $\bb_0 = 0.1$, $\M_0 = 40 \unit_u$, $C_0 = 10^4I_u$ and $a_0 = 0.5$ where $\unit_u$ is a unit $u$-vector and $I_u$ is the $u\times{u}$ identity matrix. The statistic $b_0$ is defined according to \eqref{eq:DefineNuPrior} with $\delta=0.05$ and $\epsilon=0.2$. The upper bounds for the excitability parameter space are $\etamax = 1.1s_{\tau}$ and $\lambdamax = 14$. In the case study, the upper bound for the scale parameter was reduced to $\lambdamax=7$.

Resampling in Algorithm~\ref{tab:Alg}, is performed by systematic sampling on the residuals of particle weights \citep{Hol06}. The number of particle samples and the number of rectangular lattice cells  is initially $N=5000$ and $|\mathcal{G}|=30\times{30}$ respectively. 

Accuracy in MU-number posterior mass function is managed by ensuring
that for each model, $u$, the range in marginal log-likelihood
estimate from 3 independent runs of the SMC scheme is less than $1$
whenever the posterior probability is greater than 1\%. If not, then
the particle set is increased in steps of $5000$ samples to reduce
Monte Carlo variability. Once this criterion is satisfied, the lattice
for the numerical integration 
is made finer by 10 vertices in both dimensions and the stability of
the estimates to increasing grid size is checked; instability leads to
a further check of the Monte Carlo variability and, if necessary,
an increase in the particle set size, and then a further increase in
the number of vertices; iteration between these two steps continues
until the results are numerically stable and have a low variance. For
a particular data set, once
the minimal number of particles and grid size required for stability
have been ascertained, a further $7$ runs are performed using these
settings and the final marginal log-likelihood estimate is the average of
the result from the total of $10$ runs.

\end{document}